\documentclass[11pt]{article}
\usepackage[margin=1in]{geometry}
\usepackage{amsfonts}
\usepackage{amsmath,amsthm,amssymb}
\usepackage{latexsym}
\usepackage{epic}
\usepackage{epsfig}
\usepackage{hyperref}
\usepackage{verbatim}
\usepackage[justification=centering]{caption}
\usepackage{enumitem}
\usepackage{color}
\allowdisplaybreaks[1]
\usepackage[numbers]{natbib}
\usepackage{algorithm}
\usepackage{algorithmic}
\usepackage{setspace}


\newtheorem{theorem}{Theorem}[section]

\newtheorem{lemma}[theorem]{Lemma}
\newtheorem{proposition}[theorem]{Proposition}

\newtheorem{definition}[theorem]{Definition}

\newtheorem{fact}[theorem]{Fact}
\newtheorem{example}[theorem]{Example}

\newtheorem{prop}[theorem]{Proposition}

\newtheorem*{conjecture*}{Conjecture}
\newtheoremstyle{nonindented}{1ex}{1ex}{}{}{\bfseries}{.}{.5em}{}
\newtheoremstyle{indented}{1ex}{1ex}{\itshape\addtolength{\leftskip}{0.6cm}\addtolength{\rightskip}{0.6cm}}{}{\bfseries}{.}{.5em}{}
\theoremstyle{nonindented}
\theoremstyle{indented}
\theoremstyle{plain}

\newcommand{\set}[1]{\left\{ #1 \right\}}
\newcommand{\union}{\cup}
\newcommand{\intersect}{\cap}

\newcommand{\sm}{\setminus}

\renewcommand{\bar}{\overline}



\def\Pr{\qopname\relax n{\mathbf{Pr}}}
\def\Ex{\qopname\relax n{\mathbf{E}}}

\newcommand{\RR}{\mathbb{R}}
\newcommand{\RRp}{\RR_+}

\newcommand{\NN}{\mathbb{N}}

\def\A{\mathcal{A}}

\def\D{\mathcal{D}}
\def\E{\mathcal{E}}
\def\F{\mathcal{F}}

\def\I{\mathcal{I}}

\def\M{\mathcal{M}}

\def\P{\mathcal{P}}

\def\W{\mathcal{W}}

\def\Y{\mathcal{Y}}

\def\eps{\epsilon}
\def\sse{\subseteq}

\newcommand{\one}{{\bf 1}}

\newcommand{\eat}[1]{}





\newcommand{\mini}[1]{\mbox{minimize} & {#1} &\\}
\newcommand{\maxi}[1]{\mbox{maximize} & {#1 } & \\}

\newcommand{\st}{\mbox{subject to} }
\newcommand{\con}[1]{&#1 & \\}
\newcommand{\qcon}[2]{&#1, & \mbox{for } #2.  \\}

\newenvironment{lp*}{\begin{equation*}  \begin{array}{lll}}{\end{array}\end{equation*}}


\title{The Outer Limits of Contention Resolution on Matroids and Connections to the Secretary Problem}

\author{
Shaddin Dughmi\thanks{This work was supported by NSF CAREER Award CCF-1350900.} \\
Department of Computer Science\\
University of Southern California\\
{\tt shaddin@usc.edu}
}

\begin{document}

\maketitle

\begin{abstract}
Contention resolution schemes have proven to be a useful and unifying abstraction for a variety of constrained optimization problems, in both offline and online arrival models. Much of prior work restricts attention to product distributions for the input set of elements, and studies contention resolution for increasingly general packing constraints, both offline and online. In this paper, we instead focus on generalizing the input distribution, restricting attention to matroid constraints in both the offline and online random arrival models. In particular, we study contention resolution when the input set is arbitrarily distributed, and may exhibit positive and/or negative correlations between elements. We characterize the distributions for which offline contention resolution is possible, and establish some of their basic closure properties. Our characterization can be interpreted as a distributional generalization of the matroid covering theorem.  For the online random arrival model, we show that contention resolution is intimately tied to the secretary problem via two results. First, we show that a competitive algorithm for the matroid secretary problem implies that online contention resolution is essentially as powerful as offline contention resolution for matroids, so long as the algorithm is given the input distribution. Second, we reduce the matroid secretary problem to the design of an online contention resolution scheme of a particular form.

\end{abstract}

\newcommand{\ind}{\mathbf{Ind}}
\renewcommand{\binom}{\mathbf{Binom}}
\newcommand{\spn}{\mathbf{span}}
\newcommand{\rank}{\mathbf{rank}}
\newcommand{\OPT}{\mathbf{OPT}}
\newcommand{\convexhull}{\mathbf{convexhull}}

\section{Introduction}
The notion of a contention resolution scheme (CRS) abstracts a familiar task in constrained optimization: converting  a (random) set-valued solution which is \emph{ex-ante} (i.e., on average) feasible for a packing problem  to one which is \emph{ex-post} (i.e., always) feasible. Unlike randomized rounding algorithms more broadly, which in general may be catered to both the constraint and objective function at hand, a contention resolution scheme is specific only to the constraints of the problem, and preserves solution quality in a manner which is largely agnostic to the objective function\footnote{In its most general form, a CRS approximately preserves all linear objective functions simultaneously, whereas a \emph{monotone} CRS approximately preserves all submodular objectives \cite{CRS}.} --- element by element.  Since they were formalized by \citet{CRS}, CRSs have been connected to a variety of online and offline computational tasks, including rounding the solutions of mathematical programs \cite{CRS}, online mechanism design and stochastic probing \cite{OCRS,ROCRS}, and prophet inequalities \cite{OCRS, OCRS_prophet}.


Starting with \cite{CRS}, prior work defines an (offline) contention resolution scheme for a set system $(\E, \I)$ --- where $\E$ is a \emph{ground set} of \emph{elements} and $\I \sse 2^\E$ is a downwards-closed family of \emph{feasible sets} --- as an algorithm which takes as input the marginal probabilities $x \in [0,1]^\E$ of a product distribution $\D$ supported on $2^\E$ as well as a random set $R \sim \D$ of \emph{active elements}, and must output a feasible subset $S$ of $R$. The contention resolution scheme is $\alpha$-competitive if $\Pr [ i \in S ] \geq \frac{1}{\alpha} \Pr [ i \in R]$ holds for all product distributions of interest --- typically those with marginals $x$ in the convex hull of indicator vectors of $\I$ (ex-ante feasibility). In \emph{online} contention resolution schemes, first explored by \citet{OCRS} and subsequently by \citet{ROCRS} and \citet{OCRS_prophet}, the active elements $R$ arrive sequentially and the decision to include an element in $S$ must irrevocably be made online.

The existing literature on (offline and online) contention resolution has mostly restricted attention to ex-ante-feasible and given product distributions, and varied the set system (e.g. matroids, knapsacks, and their intersections), all the while drawing connections to applications such as stochastic online problems, approximation algorithms, mechanism design, and prophet inequalities. In this paper, we restrict our attention to matroid constraints,\footnote{Though some of our results hold beyond matroids; we discuss those in the conclusion section.} and instead focus on generalizing the class of input distributions. Our main goal is to understand the power and limitations of contention resolution, offline and online, in the presence of correlations in the input distribution and without regard to ex-ante feasibility. A secondary goal is to understand how knowledge of the distribution influences contention resolution. In pursuit of both goals, we draw connections between contention resolution and the secretary problem on matroids, shedding light on challenges posed by the matroid secretary conjecture in the process.



\subsection*{Results}
Our first set of results develops an understanding of offline contention resolution on matroids. We begin with a characterization of the class of \emph{$\alpha$-uncontentious distributions}: those distributions $\D \in \Delta(2^\E)$ permitting $\alpha$-competitive offline contention resolution for a given matroid. Most notably, we show that a distribution is $\alpha$-uncontentious if and only if it satisfies a family of $2^{|\E|}$ inequalities, one for each subset of the ground set. Moreover, we observe that our inequality characterization is the natural generalization of the  \emph{matroid base covering theorem} (see e.g. \cite{welsh}) from covering a set of elements to covering a distribution over sets of elements. In other words, we show that \emph{contention resolution is the natural distributional generalization of base covering.} Leveraging our characterization, we establish some basic closure properties of the class of uncontentious distributions, and present some examples of uncontentious distributions exhibiting negative and positive correlation between elements. Finally, we examine whether knowledge of the distribution $\D$ is essential to contention resolution, and exhibit an impossibility result: any contention resolution scheme which has nontrivial guarantees for all $\alpha$-uncontentious distributions cannot be \emph{prior-independent}, in that it cannot make do with a finite number of samples from the distribution, even for very simple matroids.

Our second set of results concerns online contention resolution on matroids in the random arrival model, and in particular its connection to the matroid secretary problem. First, we show that a competitive secretary algorithm for a matroid implies that online contention resolution is essentially as powerful as offline contention resolution for that same matroid: a $\gamma$-competitive secretary algorithm implies that any $\alpha$-uncontentious distribution permits $\gamma \alpha$-competitive online contention resolution. 

Second, we provide evidence that contention resolution might hold the key to resolving the matroid secretary conjecture. As our most technically-involved result, we leverage our characterization of uncontentious distributions to show that the random set of \emph{improving elements} in a weighted matroid --- as originally defined by \citet{karger_matroidsampling} --- is $O(1)$-uncontentious. Since the improving elements can be recognized online, and moreover hold a constant fraction of the weighted rank of the matroid in expectation, our result can be loosely interpreted as a reduction from the matroid secretary problem to online contention resolution for a particular uncontentious distribution. There is one major caveat to this interpretation of our result, however: not only does the set of active (improving) elements arrive online, but so does the description of the uncontentious distribution from which that set is drawn. Though we present our proof of this result in an elementary form, the underlying arguments are reminiscent of --- and inspired by --- those often encountered in the analysis of martingales: we condition on carefully-chosen random variables, and employ a delicate charging argument between different probability events. 

Third, in response to feedback on the previous version of this manuscript, we show that our aforementioned result --- that improving elements are uncontentious --- cannot be derived as a consequence of prior work.


\subsection*{Additional Discussion of Related Work}

\subsubsection*{Contention Resolution Schemes}
Contention resolution schemes were introduced by \citet{CRS}, motivated by the problem of maximizing a submodular function subject to packing constraints. In particular, offline CRS were used to transform a randomized rounding algorithm which respects the packing constraints ex-ante to one which respects them ex-post, at the cost of the competitive ratio of the CRS. Their focus --- like that of all related work prior to ours --- was on product input distributions, in which case the optimal competitive ratio of an offline CRS was shown to equal the worst-case \emph{correlation gap} (first studied by \cite{correlation_gap_journal,CCPV11}) of the weighted rank function associated with the packing constraint. The characterization result of \cite{CRS} result forms the basis for ours.

Online contention resolution was first studied by \citet{OCRS}, and applied to a number of online selection problems. They show that simple packing constraints --- such as matroids, knapsacks, and matchings --- permit constant competitive online contention resolution schemes even when elements arrive in an unknown and adversarial order. Moreover, they show how to combine competitive online schemes for different constraints in order to yield competitive online schemes for their intersection. \citet{OCRS_prophet} obtain optimal online CRS in both the known adversarial-order model as well as the random-arrival model.  \citet{ROCRS} restrict attention to the random-arrival model, and obtain a particularly elegant algorithm and argument based on martingales, as well as improved competitive ratios for intersections of matroids and knapsacks. 

\subsubsection*{Prophet Inequalities}
Contention resolution is intimately tied to \emph{prophet inequality} problems, also known as \emph{Bayesian online selection} problems. In the traditional model for these problems, independent real-valued random variables with known distributions arrive online in a known but adversarial order, and the goal is to select a subset of the variables with maximum sum, subject to a packing constraint. An $\alpha$-competitive algorithm for a Bayesian online selection problem is also referred to as a prophet inequality with ratio $\alpha$, for historical reasons. Krengel, Sucheston, and Garling \cite{KS77, KS78} proved the first (classical) single-choice prophet inequality with ratio $1/2$ for selecting a single variable (i.e., a 1-uniform matroid packing constraint). Motivated by applications in algorithmic mechanism design, more recent work (e.g. \cite{HKS07,Alaei14,CHMS10,Yan11}) pursued prophet inequalities for more general packing constraints. Of particular note is the work of \citet{matroid_prophet}, who proved an optimal prophet inequality with ratio $1/2$ for matroids. Also notable is polylogarithmic prophet inequality for general packing constraints due to \citet{Rub16}.  The (easier) variant of Bayesian online selection problems in which the variables arrive in a uniformly random order has also received recent interest, resulting in improved prophet inequalities for various packing constraints \cite{prophet_secretary_a, prophet_secretary_b, prophet_secretary_c}.

It was shown by \citet{OCRS} that an online CRS yields a prophet inequality with the same competitive ratio, and in the same arrival model. A weak converse is also true, as shown by \citet{OCRS_prophet}: a stronger form of prophet inequality --- in particular one which competes against the \emph{ex-ante relaxation} of the Bayesian online selection problem --- yields an online CRS with the same competitive ratio and in the same arrival model.

\subsubsection*{Beyond Known Product Distributions}


The vast majority of work on contention resolution or prophet inequalities, and all such work discussed thus far, restricts attention to known product distributions, and crucially exploits the product structure and knowledge of the distribution. We note the few exceptions next.

\citet{RSC87} and \citet{SC91} show that the single-choice prophet inequality, and some slight generalizations, continue to hold for negatively dependent random variables. It is known \cite{HK_survey} that there is no single-choice prophet inequality with ratio better than the number of variables  in the presence of arbitrary positive correlation.  Moreover, we are unaware of any nontrivial positive results for a class of distributions exhibiting positive correlation, in either prophet inequality or contention resolution models. We note that whereas \cite{OCRS_prophet} and \cite{ROCRS} use specially-crafted correlated distributions as benchmarks, their results and techniques do not appear to shed light on contention resolution or prophet inequalities in the presence of correlation more generally.

Some work has relaxed  the requirement that the distributions be known in prophet inequality problems. \citet{prophet_limited} study prophet inequality problems  when only a single sample is given from each distribution, and obtain constant competitive ratios for a variety of constraints. \citet{wang_singlesample} obtains an optimal algorithm for the single-choice prophet inequality, with ratio $1/2$, in the same single-sample model. \citet{prophet_iid_unknown} study the single-choice prophet inequality with i.i.d. variables drawn from an unknown distribution, and characterize the relationship between the competitive ratio and the number of samples available from the distribution.

\subsubsection*{Secretary Problems}
In a \emph{generalized secretary problem}, a set of adversarially chosen variables arrive online in a random order, and the goal is to select a subset of the variables with maximum sum subject to a packing constraint. The (classical) single-choice \emph{secretary problem}, corresponding to a 1-uniform matroid constraint,  was introduced and solved by \citet{Dynkin}.  The \emph{matroid secretary problem} was introduced by \citet{matsec}, and has since spawned a long line of work. Constant-competitive algorithms have been discovered for most natural matroids and for some alternative models -- see \citet{dinitz_survey} for a semi-recent survey --- though the general conjecture remains open. The state of the art for the general matroid secretary problem is a $O(\log \log \rank)$-competitive algorithm due to \citet{lachish_loglog}, which was henceforth simplified by \citet{svensson_loglog}. 



\section{Preliminaries}

\subsection{Matroid Theory Basics}

We use standard definitions from matroid theory; for details see \cite{oxley,welsh}. A matroid $\M=(\E,\I)$ consists of a \emph{ground set} $\E$ of \emph{elements}, and a family $\I \sse 2^\E$ of \emph{independent sets}, satisfying the three \emph{matroid axioms}. A \emph{weighted matroid} $(\M,w)$ consists of a matroid $\M=(\E,\I)$ together with weights $w \in \RR^\E$ on the elements. We use the standard notions of a \emph{dependent set}, \emph{circuit}, \emph{flat}, and \emph{minor} in a matroid. We denote the \emph{rank} of a matroid $\M$ as $\rank(\M)$, and the rank of a set of elements $A$ in $\M$ as $\rank_\M(A)$, or $\rank(A)$ when $\M$ is clear from context. Overloading notation, we use $\rank^\M_w(A)$ to denote the \emph{weighted rank} of a set $A$ --- the maximum weight of an independent subset of $A$ --- in the weighted matroid $(\M,w)$, though we omit the superscript $\M$ when the matroid is clear from context. We note that both rank and weighted rank are submodular set functions on the ground set of the matroid. For $\M=(\E,\I)$ and $A \sse \E$, we denote the \emph{restriction} of $\M$ to $A$ as $\M | A$, \emph{deletion} of $A$  as $\M \sm A$, and \emph{contraction} by $A$ as $\M / A$.

When $\E$ is clear from context, and $S \sse \E$, we use $\one_S \in \set{0,1}^\E$ to denote the vector indicating membership in $S$. We often reference the \emph{matroid polytope} $\P(\M)$ of a matroid $\M=(\E,\I)$, defined as the convex hull of $\set{\one_S : S \in \I}$, or equivalently as the family of $x \in [0,1]^\E$ satisfying $\sum_{i \in S} x_i \leq \rank_\M(S)$ for all $S \sse \E$.

Throughout this paper we assume that any weighted matroid has distinct weights. This assumption is made merely to simplify some of our proofs, and --- using standard tie-breaking arguments --- can be shown to be without loss of generality in as much as our results are concerned. Under this assumption, we define $\OPT^\M_w(A)$ as the (unique) maximum-weight independent subset of $A$ of minimum cardinality (excluding zero-weight elements), and we omit the superscript when the matroid is clear from context.


\subsection{The Matroid Secretary Problem}

  In the \emph{matroid secretary problem}, originally defined by \cite{matsec} there is matroid $\M=(\E,\I)$ with nonnegative weights $w: \E \to \RRp$ on the elements. Elements $\E$ arrive online in a uniformly random order $\Pi$, and an online algorithm must irrevocably accept or reject an element when it arrives, subject to accepting an independent set of $\M$. The algorithm is given $\M$ at the outset (as an independence oracle), but the weights $w$ are chosen adversarially before the order $\Pi$ is drawn and then are revealed online. The goal of the online algorithm is to maximize the expected weight of the accepted set of elements. 
We say that  an algorithm is \emph{$\alpha$-competitive} for a class of matroids if for every matroid $\M$ in that class and every adversarial choice of $w$, the expected weight of the accepted set (over the random choice of $\Pi$ and any internal randomness of the algorithm) is at least an $\alpha$ fraction of the maximum weight of an independent set of $(\M,w)$.

The \emph{matroid secretary conjecture}, posed by \cite{matsec}, postulates that the matroid secretary problem admits an (online) algorithm which is constant-competitive for all matroids.

\subsection{Miscellaneous Notation and Terminology}

We denote the natural numbers by $\NN$, and the nonnegative real numbers by $\RRp$. Given a set $\A$ with weights $w \in \RR^\A$, and a subset $B \sse \A$, we use the shorthand $w(B) = \sum_{i \in B} w_i$. We use $[n]$ as shorthand for the set ${1,\ldots,n}$. For a set $A$, we use $\Delta(A)$ to denote the family of distributions over $A$, and $2^A$ to denote the family of subsets of $A$.

Let $\A$ be a finite ground set. For a distribution $\D$ supported on $2^\A$, we define the vector $x \in [0,1]^\A$ of \emph{marginals} of $\D$ by $x_i = \Pr_{B \sim \D}[ i \in B]$, and refer to $x_i$ as the \emph{marginal probability} of $i$ in $\D$. When marginals $x \in [0,1]^\A$ are given, we use $\ind(x)$ to denote the distribution of the random set $B \sse \A$ which includes each element $i \in \A$ independently with probability $x_i$. We also use $\ind_p(\A)$ as shorthand for $\ind(x)$ when $x_i =p$ for all $i \in \A$.






\section{Understanding Contention Resolution}

\subsection{The Basics of Contention Resolution}

The definitions below are parametrized by a given matroid $\M=(\E, \I)$.

\begin{definition}
A \emph{contention resolution map (CRM)} $\phi$ is a randomized function from $2^\E$ to $\I$ with the property that $\phi(R) \sse R$ for all $R \sse \E$. Such a map is \emph{$\alpha$-competitive} for a distribution $\D \in \Delta(2^\E)$  if, for $R \sim \D$, we have $\Pr[ i \in R] \leq \alpha \Pr [ i \in \phi(R)]$ for all $i \in \E$. 
\end{definition}



The following is known from \citet{CRS}. 
\begin{theorem}[\cite{CRS}]
  Every product distribution with marginals in $\P(\M)$ admits an $\frac{e}{e-1}$-competitive CRM.
\end{theorem}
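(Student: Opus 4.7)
The plan is to produce the CRM via LP duality, using the correlation gap of the weighted matroid rank function as the key certificate.

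First, I would encode the existence of an $\frac{e}{e-1}$-competitive CRM for the input distribution $\D = \ind(x)$ as a finite feasibility LP. The variables $y_{I,R} \geq 0$, ranging over $R \sse \E$ and independent $I \sse R$, represent the conditional probability $\Pr[\phi(R) = I \mid R]$. The constraints are the normalization $\sum_{I \sse R, I \in \I} y_{I,R} = 1$ for each $R$ (dual multipliers $\mu_R$, free), together with the competitiveness requirement
$$\sum_{R \sse \E} \Pr_\D[R] \sum_{I \ni i} y_{I,R} \;\geq\; (1 - 1/e)\, x_i \quad \text{for every } i \in \E$$
(dual multipliers $w_i \geq 0$). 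Writing the dual and optimizing over the free $\mu_R$'s forces $\mu_R = \Pr_\D[R]\cdot \rank_w(R)$ at the minimum, so primal infeasibility would produce weights $w \in \RRp^\E$ witnessing
$$\ex_{R \sim \D}[\rank_w(R)] \;<\; (1 - 1/e)\, w \cdot x,$$
where $\rank_w$ is the weighted matroid rank function.

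Next, I would rule out such a witness via the correlation gap of monotone submodular functions. The function $\rank_w$ is monotone submodular, and it is classical that for any such $f$ the multilinear extension $F(x) = \ex_{R \sim \ind(x)}[f(R)]$ satisfies $F(x) \geq (1 - 1/e)\, f^+(x)$, where $f^+$ denotes the concave closure. Since $x \in \P(\M)$, we may write $x = \sum_k \lambda_k \one_{I_k}$ as a convex combination of indicators of independent sets $I_k \in \I$, and $\rank_w(I_k) = w(I_k)$ holds for each such set; therefore $f^+(x) \geq \sum_k \lambda_k\, w(I_k) = w \cdot x$. Combining gives $\ex_{R \sim \D}[\rank_w(R)] \geq (1 - 1/e)\, w \cdot x$, contradicting the Farkas witness, so the primal LP must be feasible.

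The main obstacle, in my view, is the correlation gap bound for weighted matroid rank (equivalently, any monotone submodular function). I would appeal to the classical result of Calinescu-Chekuri-P\'al-Vondr\'ak invoked in \cite{CRS}, whose standard proofs proceed either by pipage rounding or by comparing the multilinear extension to its concave envelope via a continuous-greedy monotonicity argument. The remainder of the argument is a clean LP duality step combined with the polyhedral description of $\P(\M)$ as the convex hull of independent-set indicators.
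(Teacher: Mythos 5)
Your argument is correct and is essentially the argument of \cite{CRS}, which is what the paper cites for this theorem (the paper itself does not reprove it). The LP-duality reduction you set up is the same one the paper replays in the proof of Theorem~\ref{thm:characterize_uncontentious} (the (b)$\Rightarrow$(a) direction), and the remaining ingredient --- the $1-1/e$ correlation gap for monotone submodular functions, applied to $\rank_w$ --- is precisely the external result \cite{CRS} invokes. Your step $f^+(x)\ge w\cdot x$ via the decomposition $x=\sum_k\lambda_k\one_{I_k}$ with $I_k\in\I$ is the right way to feed $x\in\P(\M)$ into the correlation-gap bound. No gaps; you have reconstructed the cited proof rather than found a new route.
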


\begin{definition}
  An \emph{online random-order contention resolution map} (henceforth \emph{online CRM} for short) is a contention resolution map $\phi$ which can be implemented as an algorithm in the online random-arrival model. In the online random-arrival model,  $\E$ is presented to the algorithm in a uniformly random order $(e_1,\ldots,e_n)$, and at the $i$th step the algorithm learns whether $e_i$ is \emph{active} --- i.e., $e_i \in R \sim \D$ --- and if so must make an irrevocable decision on whether to include $e_i$ in $\phi(R)$. 
\end{definition}

The following is known from \citet{OCRS_prophet}. 

\begin{theorem}[\cite{OCRS_prophet}]
  Every product distribution with marginals in $\P(\M)$ admits a $\frac{e}{e-1}$-competitive online CRM.
\end{theorem}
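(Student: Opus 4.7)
\emph{Plan.} I would design a natural greedy online algorithm and bound its competitive ratio element by element, using the uniform random arrival order to reduce each per-element bound to a statement about product distributions over the matroid.

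\emph{Algorithm.} Model the random arrival order by i.i.d.\ times $T_j \sim U[0,1]$. When element $i$ arrives at time $T_i = t$ and is active, accept $i$ if and only if $A_{<t} \cup \{i\} \in \I$, where $A_{<t}$ is the set already accepted. My working hypothesis is that this pure-greedy rule already suffices; if the per-element analysis turns out to be lossy I would layer on an attenuation coin $f(t) \in [0,1]$, but the cleanest version of the argument should not require it.

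\emph{Per-element reduction.} Fix $i \in \E$ and condition on $i$ being active. By exchangeability $T_i \sim U[0,1]$, and conditional on $T_i = t$ every other element $j$ arrives before $i$ with probability $t$ independently of its activity $x_j$. Hence the set $U_{<t}$ of other active elements arriving before $i$ is distributed as $\ind(tx|_{\E \sm \{i\}})$. Since $A_{<t} \sse U_{<t}$ and $\spn$ is monotone, the probability that $i$ is accepted, conditional on $i$ being active and arriving at time $t$, is at least $\Pr[i \notin \spn(U_{<t})]$---a quantity depending only on the matroid $\M$ and the product marginals $tx \in \P(\M)$.

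\emph{Key inequality and main obstacle.} It then suffices to prove
\[ \int_0^1 \Pr\bigl[i \notin \spn(U_{<t})\bigr]\, dt \;\geq\; \frac{e-1}{e}. \]
A clean sufficient condition is the pointwise bound $\Pr[i \notin \spn(U_{<t})] \geq e^{-t}$, which integrates to exactly $(e-1)/e$ and is matched by the limiting 1-uniform instance $x_i = 1/n$ where $\Pr[i \notin \spn(U_{<t})] = (1 - t/n)^{n-1} \to e^{-t}$. The main obstacle is extracting this pointwise bound from the single constraint $x \in \P(\M)$: observe that $\Pr[i \notin \spn(U_{<t})]$ is exactly the partial derivative at coordinate $i$ of the multilinear extension of the matroid rank function evaluated along the ray $y = tx$, and I would control it via two complementary routes---(a) an LP-duality argument leveraging the defining rank inequalities $\sum_{j \in S} tx_j \leq \rank(S)$ of $\P(\M)$ to build a dual certificate lower-bounding the span-avoidance probability by $e^{-t}$, and (b) a contraction/partition-style coupling that reduces the general matroid case to the boundary 1-uniform example where the computation is transparent. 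If neither route yields the full pointwise bound, the fallback is to introduce an attenuation $f(t)$ chosen to smooth out the integrand, at the cost of a more delicate but still element-by-element analysis.
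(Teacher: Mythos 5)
The paper does not prove this theorem; it cites it from \citet{OCRS_prophet} (Lee and Singla, ``Optimal Online Contention Resolution Schemes via Ex-Ante Prophet Inequalities''). So there is no in-paper proof to compare against, and I evaluate your attempt on its own terms.

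There is a genuine gap: your key pointwise inequality $\Pr[i \notin \spn(U_{<t})] \geq e^{-t}$ is false, and as a consequence the pure greedy algorithm does not achieve $(e-1)/e$. Take the $1$-uniform matroid on two elements with marginals $x_1 = \eps$, $x_2 = 1 - \eps$, which lie in $\P(\M)$. For $i = 1$, the set $U_{<t}$ contains element $2$ with probability $t(1-\eps)$, so
\[
  \Pr[1 \notin \spn(U_{<t})] \;=\; 1 - t(1-\eps),
\]
which at $t=1$ equals $\eps$ and is therefore below $e^{-1}$ as soon as $\eps < e^{-1}$. Integrating gives $\int_0^1 (1 - t(1-\eps))\,dt = (1+\eps)/2 \to 1/2$ as $\eps \to 0$: the greedy rule is only $\tfrac{1}{2}$-competitive on this instance, not $\tfrac{e-1}{e}$. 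The general matroid lemma you could hope to extract from $x \in \P(\M)$ (and which does hold, via the correlation-gap / Feldman--Svensson--Zenklusen machinery) is the weaker $\Pr[i \notin \spn(U_{<t})] \geq 1 - t$, which also only integrates to $\tfrac{1}{2}$. Your working hypothesis that ``the cleanest version of the argument should not require'' attenuation is therefore wrong: greedy alone provably cannot beat $\tfrac12$, and the attenuation you defer to a fallback is where all the actual work lies. Moreover, attenuation only ever decreases the acceptance probability of a given element, so it cannot mechanically repair a lossy per-element bound; the real proof in \cite{OCRS_prophet} instead reduces random-order OCRS to an ex-ante prophet inequality for matroids and obtains $1 - 1/e$ through a substantially different, non-greedy argument that trades off acceptance across elements rather than bounding each one by span-avoidance at its own arrival time.
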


\subsection{Uncontentious Distributions and their Characterization}

As shorthand, we refer to distributions which permit competitive (offline) CRMs as \emph{uncontentious}.
\begin{definition}
  Fix a matroid $\M=(\E,\I)$. For $\alpha \geq 1$, we say that a distribution $\D \in \Delta(2^\E)$ is \emph{$\alpha$-uncontentious} if it admits an $\alpha$-competitive contention resolution map. 
\end{definition}
For convenience, we also refer to a random set $R \sim \D$ as $\alpha$-uncontentious if its distribution $\D$ is $\alpha$-uncontentious. 
We prove the following characterization of uncontentious distributions.

\begin{theorem}\label{thm:characterize_uncontentious}
  Fix a matroid $\M=(\E,\I)$, and let $\D$ be a distribution supported on $2^\E$. The following are equivalent for every $\alpha \geq 1$.
  \begin{enumerate}[label=(\alph*)]
  \item $\D$ is $\alpha$-uncontentious (i.e., admits an $\alpha$-competitive contention resolution map).
  \item For every weight vector $w \in \RRp^\E$, the following holds for $R\sim \D$:
\[ \Ex [\rank_w(R)] \geq  \frac{1}{\alpha} \Ex[w(R)]\]
  \item For every $\F \sse \E$, the following holds for $R \sim \D$:
    \[ \Ex[ |R \intersect \F|] \leq \alpha \Ex [\rank(R \intersect \F)] \]
  \end{enumerate}
\end{theorem}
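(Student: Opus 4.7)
The plan is to establish the cycle (a) $\Rightarrow$ (b) $\Leftrightarrow$ (c), and then close the loop via (b) $\Rightarrow$ (a) using LP duality. For (a) $\Rightarrow$ (b), fixing $w \in \RRp^\E$ and an $\alpha$-competitive CRM $\phi$, I would chain
\[ \Ex[\rank_w(R)] \;\geq\; \Ex[w(\phi(R))] \;=\; \sum_i w_i \Pr[i \in \phi(R)] \;\geq\; \frac{1}{\alpha} \sum_i w_i\, x_i \;=\; \frac{1}{\alpha}\Ex[w(R)], \]
where the first inequality uses $\phi(R) \in \I$ and $\phi(R) \sse R$ pointwise (so $w(\phi(R)) \leq \rank_w(R)$), and the second uses $\alpha$-competitiveness marginal by marginal. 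For (b) $\Rightarrow$ (c), plug in the indicator weight $w = \one_\F$, noting that $w(R) = |R \cap \F|$ and $\rank_w(R) = \rank(R \cap \F)$.

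The reverse (c) $\Rightarrow$ (b) follows from the classical level-set identity $\rank_w(A) = \int_0^\infty \rank(A \cap \F_t)\, dt$, with $\F_t = \{i : w_i \geq t\}$, valid for any $w \in \RRp^\E$ (a consequence of the greedy algorithm for weighted matroid optimization). Integrating the pointwise inequality in (c) over $t$ and exchanging integral and expectation via Fubini yields
\[ \Ex[\rank_w(R)] \;=\; \int_0^\infty \Ex[\rank(R \cap \F_t)]\, dt \;\geq\; \frac{1}{\alpha}\int_0^\infty \Ex[|R \cap \F_t|]\, dt \;=\; \frac{1}{\alpha}\Ex[w(R)]. \]

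The main step, and the expected main obstacle, is (b) $\Rightarrow$ (a). I plan to phrase the existence of $\phi$ as a feasibility LP: for each $R$ in the support of $\D$, introduce a fractional rounding vector $q_R \in \P(\M)$ with $\text{supp}(q_R) \sse R$, subject to the marginal lower bounds $\Ex_{R\sim\D}[q_{R,i}] \geq x_i/\alpha$ for all $i$. Given such $q_R$'s, decompose each one as a distribution over $\{\one_I : I \in \I,\ I \sse R\}$ using the definition of $\P(\M)$ as the convex hull of independent-set indicators, and sample from this decomposition as $\phi(R)$; the marginal lower bounds then translate directly to $\alpha$-competitiveness. By LP duality (equivalently Sion's minimax, noting that the max over the $q_R$'s decouples across $R$), feasibility reduces to showing that for every probability vector $p \in \Delta(\E)$ supported on $\{i : x_i > 0\}$,
\[ \Ex_R\bigl[\, \max_{q \in \P(\M),\, \text{supp}(q) \sse R} \langle p/x,\, q\rangle\, \bigr] \;\geq\; \frac{1}{\alpha}. \]
The inner maximum is the weighted matroid LP restricted to $R$ with weights $w = p/x$, whose value equals $\rank_w(R)$ by the standard LP characterization of the matroid polytope. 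Since $\Ex_R[w(R)] = \sum_i w_i x_i = \sum_i p_i = 1$, the desired inequality is exactly condition (b) specialized to $w = p/x$; every nonnegative weight vector (up to scaling) arises in this form, so (b) supplies precisely what is needed.

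The hard part is executing this LP duality cleanly: handling elements with $x_i = 0$ (which contribute nothing to either side and can be excluded from $p$), justifying the swap of $\Ex_R$ with the per-$R$ maximization (which is valid because the $q_R$'s are independent variables), and invoking the matroid polytope's LP characterization separately for each fixed $R$ rather than trying to apply it in aggregate. The other implications are essentially bookkeeping once the level-set identity is in hand.
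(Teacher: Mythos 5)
Your proof is correct and follows essentially the same route as the paper: the $(c)\Rightarrow(b)$ step is the paper's level-set/summation argument in integral form, and the $(b)\Rightarrow(a)$ step is the same LP-duality argument, merely parametrized by the conditional marginals $q_R \in \P(\M|R)$ rather than by a distribution over deterministic CRMs (the paper's $\lambda_\phi$), which makes the decoupling of the inner maximum explicit but yields the identical dual. The remaining implications you add, $(a)\Rightarrow(b)$ and $(b)\Rightarrow(c)$, are straightforward variants of the paper's $(a)\Rightarrow(c)$.
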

\begin{proof}

Property (a) implies property (c) by applying an $\alpha$-CRM $\phi$ to $R$, noting that $\phi(R) \intersect \F$ is necessarily  an independent subset of $R \intersect \F$.
\begin{align*}
\Ex[\rank(R \intersect \F)] &\geq \Ex[ |\phi(R) \intersect \F |] \\
&= \sum_{i \in \F} \Pr[i \in \phi(R)] \\
&\geq \frac{1}{\alpha} \sum_{i \in \F} \Pr[i \in R] \\
&= \frac{1}{\alpha} \Ex[ |R \intersect \F |].
\end{align*}



Property (c) implies property (b) by a summation argument. Sort and number the elements $\E = (e_1,\ldots, e_n)$ in decreasing order of weights $w_1 \geq w_2 \geq \ldots \geq w_n \geq 0$, where $w_i$ denotes the weight of $e_i$. Denote $\E_i = \set{e_1,\ldots,e_i}$, and let $\E_0=\emptyset$, and $w_{n+1}=0$. Recalling that the greedy algorithm computes the maximum weight independent subset of a matroid, we get:
\begin{align*}
  \Ex [\rank_w(R)] &= \Ex\left[ \sum_{i=1}^n w_i \left( \rank(R \intersect \E_i) - \rank(R \intersect \E_{i-1}) \right) \right] &\mbox{Invoking the greedy algorithm on $\M | R$} \\
&= \Ex\left[ \sum_{i=1}^n (w_i - w_{i+1}) \rank(R \intersect \E_i)\right] &\mbox{Reversing order of summation}  \\
                   &\geq \frac{1}{\alpha} \Ex\left[ \sum_{i=1}^n (w_i - w_{i+1}) |R \intersect \E_i|\right] &\mbox{Invoking (c) and linearity of expectations} \\
&= \frac{1}{\alpha} \Ex\left[ \sum_{i=1}^n w_i \left( |R \intersect \E_i| - |R \intersect \E_{i-1}| \right) \right]  &\mbox{Reversing order of summation}. \\
&= \frac{1}{\alpha} \Ex [ w(R)]. 
\end{align*}

Property (b) implies property (a) by a duality argument identical to that presented in \cite{CRS}. We present a self-contained proof here. Let $x=x(\D) \in [0,1]^\E$ denote the marginals of $\D$.  The distribution $\D$ is $\alpha$-uncontentious if the optimal value of the following LP, with variables $\beta$ and  $\lambda_\phi$ for each deterministic CRM $\phi$, is at least $\frac{1}{\alpha}$.
\begin{lp*}\label{lp:crs}
\maxi{\beta}
\st
\qcon{\sum_{\phi} \lambda_\phi \Pr_{R \sim \D}[i \in \phi(R)] \geq \beta  x_i}{i \in \E}
\con{\sum_{\phi} \lambda_\phi = 1}
\con{ \lambda \succeq 0}
\end{lp*}
The dual of the preceding LP is the following
\begin{lp*}
  \mini{\mu}
  \st
\qcon{\sum_{i \in \E} w_i \Pr_{R \sim D} [i \in \phi(R)] \leq \mu}{\mbox{all CRM $\phi$}}
\con{\sum_{i \in \E} w_i x_i = 1}
\con{w \succeq 0}
\end{lp*}
It is not hard to see that, at optimality, the binding constraint on $\mu$ corresponds to the CRM $\phi$ which maps each set $R$ to its maximum weight independent subset according to weights $w$. It follows that the optimal value of the dual, and hence the primal, equals the minimum over all weight vectors $w \succeq 0$ of the ratio $\frac{\Ex[\rank_w(R)]}{\sum_i w_i x_i}$.  (b) implies that this quantity is at least $\frac{1}{\alpha}$, as needed.



\end{proof}

We note that the equivalence between (a) and (b) is essentially implicit in the arguments of \cite{CRS}. Condition (c) is the most notable part of Theorem~\ref{thm:characterize_uncontentious}, in no small part because it is  reminiscent of the \emph{matroid base covering theorem} (see e.g., \cite{welsh}). This theorem can equivalently be stated as follows: a (deterministic) set $T \sse \E$ in a matroid $\M=(\E,\I)$ can be \emph{covered by} (i.e., expressed as a union of) $\alpha \in \NN$ independent sets if and only if $|S| \leq \alpha \ \rank_\M(S)$ for all $S \sse T$. In light of part (c) of Theorem~\ref{thm:characterize_uncontentious}, a set $T$ of elements can be covered by $\alpha$ independent sets if and only if the point distribution on $T$ is $\alpha$-uncontentious. Therefore, \emph{we can interpret contention resolution as a distributional  generalization of base covering.}

\subsection{Elementary Properties of Uncontentious Distributions}
We state some elementary, yet quite useful, properties of uncontentious distributions.

\begin{prop}
  Fix a matroid $\M$. Every  $\alpha$-uncontentious distribution $\D$ for $\alpha \geq 1$ has marginals $x(\D) \in \alpha \P(\M)$.
\end{prop}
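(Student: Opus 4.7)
The plan is to deduce the proposition directly from part (c) of Theorem~\ref{thm:characterize_uncontentious}, using the fact that the matroid polytope $\P(\M)$ is defined by the rank inequalities $\sum_{i \in S} x_i \leq \rank_\M(S)$ for all $S \sse \E$. So the goal reduces to showing that $\sum_{i \in S} x_i(\D) \leq \alpha \rank_\M(S)$ for every $S \sse \E$.

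First, I would fix an arbitrary $S \sse \E$ and let $R \sim \D$. Applying Theorem~\ref{thm:characterize_uncontentious}(c) with $\F = S$ yields
\[ \sum_{i \in S} x_i(\D) \;=\; \sum_{i \in S} \Pr[i \in R] \;=\; \Ex[\,|R \intersect S|\,] \;\leq\; \alpha \,\Ex[\rank(R \intersect S)]. \]
Next, since $R \intersect S \sse S$ and rank is a monotone set function, $\rank(R \intersect S) \leq \rank(S)$ pointwise, and hence $\Ex[\rank(R \intersect S)] \leq \rank_\M(S)$. Combining these two inequalities gives $\sum_{i \in S} x_i(\D) \leq \alpha \rank_\M(S)$, which is exactly the rank inequality defining membership of $x(\D)/\alpha$ in $\P(\M)$. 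Since $S$ was arbitrary, $x(\D) \in \alpha\P(\M)$.

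There is no real obstacle here: the work was already done in establishing the equivalence of (a) and (c) in Theorem~\ref{thm:characterize_uncontentious}, and the proposition simply packages the $\F = S$ case of (c) together with monotonicity of rank. The only thing to double-check is that the definition of $\P(\M)$ used in the preliminaries (convex hull of indicator vectors of independent sets, equivalently the rank-inequality polytope) gives the characterization I am invoking, which the preliminaries explicitly state.
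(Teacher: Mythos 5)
Your proof is correct and matches the paper's argument essentially line for line: apply Theorem~\ref{thm:characterize_uncontentious}(c) with $\F = S$, then use monotonicity of the rank function to bound $\Ex[\rank(R \cap S)]$ by $\rank_\M(S)$, recovering the defining inequalities of $\alpha\P(\M)$. The only difference is that you spell out the monotonicity step, which the paper leaves implicit.
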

\begin{proof}
  Let $x=x(\D)$ and $R \sim \D$. From Theorem~\ref{thm:characterize_uncontentious} (c), for every set of ground set elements $\F$ we have \[\sum_{i \in \F} x_i = \Ex[|R \intersect \F|] \leq \alpha \Ex[\rank_\M(R \intersect \F)] \leq \alpha\ \rank_\M(\F).\]
These are the inequalities describing $\alpha \P(\M)$.
\end{proof}

\begin{prop}\label{prop:uncontentious_mixing}
  Fix a matroid. A mixture of $\alpha$-uncontentious distributions is $\alpha$-uncontentious.
\end{prop}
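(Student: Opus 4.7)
The plan is to use the characterization from Theorem~\ref{thm:characterize_uncontentious}, since condition (c) is linear in the distribution $\D$ and thus plays well with mixtures. Write the mixture as $\D = \sum_j p_j \D_j$ with $p_j \geq 0$, $\sum_j p_j = 1$, and each $\D_j$ an $\alpha$-uncontentious distribution on $2^\E$. I would then fix an arbitrary $\F \sse \E$ and verify condition (c) for $\D$.

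Concretely, for $R \sim \D$ and $R_j \sim \D_j$, linearity of expectation over the mixture gives
\[
\Ex_{R \sim \D}[|R \cap \F|] = \sum_j p_j \, \Ex_{R_j \sim \D_j}[|R_j \cap \F|], \qquad \Ex_{R \sim \D}[\rank(R \cap \F)] = \sum_j p_j \, \Ex_{R_j \sim \D_j}[\rank(R_j \cap \F)].
\]
Applying condition (c) to each $\D_j$ bounds each summand of the first expression by $\alpha$ times the corresponding summand of the second, so a weighted sum with coefficients $p_j$ yields $\Ex[|R \cap \F|] \leq \alpha \Ex[\rank(R \cap \F)]$. Since $\F$ was arbitrary, the equivalence in Theorem~\ref{thm:characterize_uncontentious} implies that $\D$ is $\alpha$-uncontentious. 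Using characterization (b) with a fixed weight vector $w$ would work equally well via the identical averaging argument.

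There is really no obstacle here; the slight conceptual subtlety --- and the reason one does not just ``mix CRMs'' --- is that a CRM observes only the set $R$, not the mixture component $j$ that produced it. The characterization sidesteps this entirely by reducing uncontentiousness to inequalities that depend only on the marginals and rank expectations of $\D$, both of which are linear in $\D$. If a more direct construction were desired, one could instead define $\phi$ to sample a component $j$ from the posterior $\Pr[j \mid R] = p_j P_j(R)/P(R)$ and then apply $\phi_j(R)$, but this is unnecessary given the characterization.
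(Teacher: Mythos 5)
Your proof is correct and takes essentially the same approach as the paper: the paper invokes the linearity of characterization (b) together with linearity of expectations, while you do the same with characterization (c) and explicitly note (b) works identically. The extra remarks on why one cannot naively mix CRMs, and the posterior-sampling alternative, are accurate but go beyond what the paper's one-line proof contains.
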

\begin{proof}
Follows directly from Theorem~\ref{thm:characterize_uncontentious} (b) and linearity of expectations.  
\end{proof}

\begin{prop}\label{prop:uncontentious_restriction}
  Fix a matroid $\M=(\E,\I)$, and let $\M=(\E',\I')$ be a minor of $\M$, with $\E' \sse \E$. If a random set $R \sse \E'$ is $\alpha$-uncontentious in $\M'$, then $R$ is also $\alpha$-uncontentious in $\M$.
\end{prop}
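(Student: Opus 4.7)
The plan is to use the inequality characterization from Theorem~\ref{thm:characterize_uncontentious}(c): it suffices to show that for every $\F \subseteq \E$, we have $\Ex[|R \cap \F|] \leq \alpha\, \Ex[\rank_\M(R \cap \F)]$, given the corresponding inequalities for $\M'$. The key observation underlying the proof is a pointwise domination of rank functions, namely that $\rank_{\M'}(S) \leq \rank_\M(S)$ for every $S \subseteq \E'$. Once this is in hand, the result follows by a one-line chain of inequalities.

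First, I would verify the rank domination. Writing the minor as $\M' = (\M / C) \setminus D$ for some disjoint $C, D \subseteq \E$ with $\E' = \E \setminus (C \cup D)$, it is a standard matroid fact that $\rank_{\M/C}(S) = \rank_\M(S \cup C) - \rank_\M(C)$, which is at most $\rank_\M(S)$ by submodularity of the rank function. Deletion preserves rank on its ground set, so altogether $\rank_{\M'}(S) \leq \rank_\M(S)$ for all $S \subseteq \E'$. Equivalently, every independent set of $\M'$ is independent in $\M$, so the greedy bound on rank goes through.

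Next, given any $\F \subseteq \E$, let $\F' = \F \cap \E'$. Since $R \subseteq \E'$ with probability one, we have $R \cap \F = R \cap \F'$, and so
\begin{align*}
\Ex[|R \cap \F|] \;=\; \Ex[|R \cap \F'|] \;&\leq\; \alpha\, \Ex[\rank_{\M'}(R \cap \F')] \\
&\leq\; \alpha\, \Ex[\rank_\M(R \cap \F')] \;=\; \alpha\, \Ex[\rank_\M(R \cap \F)],
\end{align*}
where the first inequality invokes Theorem~\ref{thm:characterize_uncontentious}(c) applied to $R$ in $\M'$ with the set $\F' \subseteq \E'$, and the second uses the rank domination above. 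Applying Theorem~\ref{thm:characterize_uncontentious}(c) in the other direction concludes that $R$ is $\alpha$-uncontentious in $\M$.

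The only nontrivial step is the rank-domination inequality, and even that is essentially a matter of unpacking the definitions of contraction and deletion. I do not anticipate any real obstacle here; the proof is a direct application of the characterization theorem combined with elementary matroid identities.
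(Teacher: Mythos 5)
Your proof is correct, but it takes a more roundabout route than the paper's. The paper's proof is a two-line observation: because $\M'$ is a minor of $\M$ with $\E' \subseteq \E$, every independent set of $\M'$ is also independent in $\M$ (for deletion this is trivial; for contraction it follows by downward-closure of independence, since $S \in \I(\M/C)$ means $S \cup B \in \I(\M)$ for a basis $B$ of $C$). Thus the very same CRM $\phi$ that certifies $\alpha$-uncontentiousness in $\M'$ also works in $\M$, because its outputs remain feasible. You instead route the argument through the inequality characterization in Theorem~\ref{thm:characterize_uncontentious}(c), proving the rank-domination $\rank_{\M'}(S) \leq \rank_\M(S)$ and pushing it through the expectations. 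Your rank-domination fact is equivalent to the paper's ``independent sets of $\M'$ are independent in $\M$'' (indeed you note this yourself), so the two proofs rest on the same elementary matroid observation; the difference is that the paper uses it directly on the CRM while you convert it into an inequality and invoke the characterization theorem twice. Both are valid, and your version has the modest advantage of showing how characterization (c) cleanly absorbs such monotonicity statements, but the paper's direct CRM argument is shorter, needs no appeal to the characterization, and works verbatim for any downward-closed set system rather than just matroids.
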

\begin{proof}
  An independent set of $\M'$ is also independent in $\M$. Therefore, the proposition follows by simply applying the same CRM in the context of the larger matroid $\M$. 
\end{proof}

\begin{prop}\label{prop:uncontentious_subsampling}
  Fix a matroid. Let $R$ be an $\alpha$-uncontentious random set, and let $R' \sim \ind_p(R)$ for some $p \in [0,1]$. The random set $R'$ is $\alpha$-uncontentious as well.
\end{prop}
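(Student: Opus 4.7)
The plan is to verify property (c) of Theorem~\ref{thm:characterize_uncontentious} directly for $R'$, leveraging the fact that $R$ satisfies (c). Fix a subset $\F \sse \E$. Since each element of $R$ is kept in $R'$ independently with probability $p$, linearity of expectation immediately gives $\Ex[|R' \intersect \F|] = p\, \Ex[|R \intersect \F|]$. So the target inequality $\Ex[|R' \intersect \F|] \leq \alpha\, \Ex[\rank(R' \intersect \F)]$ will follow from the estimate $\Ex[\rank(R' \intersect \F)] \geq p\, \Ex[\rank(R \intersect \F)]$ combined with the assumption that $R$ satisfies (c).

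The key step is therefore the following deterministic lemma: for any set $T \sse \E$, if $T'$ is obtained from $T$ by independent $p$-subsampling, then $\Ex[\rank(T')] \geq p\, \rank(T)$. This is easy: fix a maximum independent subset $B \sse T$ with $|B| = \rank(T)$; then $B \intersect T'$ is a subset of an independent set, hence independent, so $\rank(T') \geq |B \intersect T'|$, and taking expectations gives $\Ex[\rank(T')] \geq p |B| = p\, \rank(T)$. I would apply this lemma conditionally on $R$, with $T = R \intersect \F$ and $T' = R' \intersect \F$, to obtain $\Ex[\rank(R' \intersect \F) \mid R] \geq p\, \rank(R \intersect \F)$, then take outer expectations.

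Chaining the three inequalities gives
\[
\Ex[\rank(R' \intersect \F)] \;\geq\; p\, \Ex[\rank(R \intersect \F)] \;\geq\; \frac{p}{\alpha}\, \Ex[|R \intersect \F|] \;=\; \frac{1}{\alpha}\, \Ex[|R' \intersect \F|],
\]
where the middle inequality uses that $R$ is $\alpha$-uncontentious via property (c). This is exactly property (c) for $R'$, so by Theorem~\ref{thm:characterize_uncontentious} the random set $R'$ is $\alpha$-uncontentious.

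I do not anticipate a real obstacle: the conditioning on $R$ cleanly reduces the problem to a deterministic statement about how matroid rank degrades under random subsampling, and that statement is a one-line consequence of the greedy construction of a max independent subset. (One could alternatively verify property (b) by the same conditioning trick, replacing $\rank$ with $\rank_w$ and $|B|$ with $w(B)$; the structure of the argument is identical.)
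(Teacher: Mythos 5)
Your proof is correct, and it is essentially the same argument the paper gives, with two cosmetic differences. The paper verifies property (b) of Theorem~\ref{thm:characterize_uncontentious} rather than (c), and it obtains the key bound $\Ex[\rank_w(R')] \geq p\,\Ex[\rank_w(R)]$ by simply citing submodularity of the (weighted) rank function, whereas you verify (c) and prove the corresponding bound from scratch via the max-independent-subset $B$ and the observation that $B \cap T'$ remains independent. Your version is a touch more self-contained; the paper's is a touch shorter; the underlying idea (condition on $R$, note that subsampling loses at most a factor $p$ in expected rank while losing exactly a factor $p$ in expected size/weight, then chain with uncontentiousness of $R$) is identical, as you yourself note in your closing parenthetical.
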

\begin{proof}
  We use Theorem~\ref{thm:characterize_uncontentious} (b). For any weight vector $w$, submodularity of the weighted rank function implies that $\Ex[ \rank_w(R') ] \geq p \Ex [ \rank_w(R)]$. It follows that $\Ex [w(R') ] = p \Ex [ w(R)] \leq p \alpha \Ex[\rank_w(R)] \leq \alpha \Ex[ \rank_w(R')]$.
\end{proof}

We note that Proposition ~\ref{prop:uncontentious_subsampling} is tight when both $p$ and $\alpha$ are absolute constants. In particular, the random set $R'$ cannot be guaranteed to be $\alpha'$-uncontentious for a constant $\alpha' < \alpha$, even if $p$ is a very small constant. To see this, consider the a $1$-uniform matroid with elements $[n]$, and the following $2$-uncontentious random set $R$: For every singleton $i \in [n]$ we have $\Pr[R=\set{i}] = \frac{1}{n+1}$, and $\Pr[R=[n]]= \frac{1}{n+1}$.

\subsection{Examples of Uncontentious Distributions}
We now present some examples of uncontentious distributions in order to develop a feel for them. As mentioned previously, and shown in \cite{CRS}, every product distribution with marginals in the matroid polytope is $\frac{e}{e-1}$-uncontentious.  Combined with Proposition \ref{prop:uncontentious_mixing}, this extends to mixtures of product distributions.
\begin{proposition}
  Fix a matroid $\M=(\E,\I)$, and let $\D \in \Delta(2^\E)$ be a mixture of product distributions, each with marginals in $\P(\M)$. It follows that $\D$ is $\frac{e}{e-1}$-uncontentious.
\end{proposition}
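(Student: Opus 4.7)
The proof is essentially a one-line composition of two results already in place, so my plan is to chain them together rather than to do anything new. First, I would invoke the theorem of \cite{CRS} stated earlier as Theorem about product distributions: every product distribution $\D'$ on $2^\E$ whose marginal vector lies in $\P(\M)$ is $\tfrac{e}{e-1}$-uncontentious. This says that each ``atom'' of our mixture, viewed as a distribution in its own right, admits an $\tfrac{e}{e-1}$-competitive CRM.

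Second, I would apply Proposition~\ref{prop:uncontentious_mixing}, which says that a mixture of $\alpha$-uncontentious distributions is again $\alpha$-uncontentious. Writing $\D = \int \D'\,d\mu(\D')$ for the given mixing measure $\mu$ over product distributions $\D'$ with marginals in $\P(\M)$, each $\D'$ is $\tfrac{e}{e-1}$-uncontentious by the previous step, so $\D$ inherits the same competitive ratio.

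If one wants to avoid appealing to Proposition~\ref{prop:uncontentious_mixing} as a black box, the cleanest alternative is to verify Theorem~\ref{thm:characterize_uncontentious}(b) directly by linearity of expectation: for $R \sim \D$ and any $w \in \RRp^\E$,
\begin{equation*}
\Ex[\rank_w(R)] \;=\; \int \Ex_{R' \sim \D'}[\rank_w(R')] \, d\mu(\D')
\;\geq\; \tfrac{e-1}{e}\int \Ex_{R' \sim \D'}[w(R')]\, d\mu(\D')
\;=\; \tfrac{e-1}{e}\, \Ex[w(R)],
\end{equation*}
where the inequality uses the characterization applied to each product distribution $\D'$. Then Theorem~\ref{thm:characterize_uncontentious} delivers (a) from (b).

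There is no real obstacle here: the only thing to be careful about is that the mixing measure $\mu$ need not be finitely supported, but since Proposition~\ref{prop:uncontentious_mixing} and the characterization in Theorem~\ref{thm:characterize_uncontentious} are stated in expectation, the argument goes through for arbitrary mixtures with no change. So the ``proof'' is really just naming the two ingredients in the right order.
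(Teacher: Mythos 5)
Your proof is correct and matches the paper's approach exactly: the paper's justification is precisely the combination of the \cite{CRS} theorem on product distributions with Proposition~\ref{prop:uncontentious_mixing}. Your alternative direct verification via Theorem~\ref{thm:characterize_uncontentious}(b) and linearity of expectation is also fine, and is in fact just an inlining of the proof of Proposition~\ref{prop:uncontentious_mixing}.
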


Going beyond product distributions and their mixtures, if a distribution satisfies a certain strong notion of negative correlation, defined in \cite{CVZ10}, then it also is $\frac{e}{e-1}$-uncontentious.
\begin{proposition}
 Fix a matroid $\M=(\E,\I)$, and let $\D \in \Delta(2^\E)$ be a distribution with marginals $x=x(D) \in \P(\M)$. Assume that $\D$  satisfies the \emph{property of increasing submodular expectations}: for every submodular function $f$ we have $\Ex_{R \sim \D} [f(R)] \geq \Ex_{S \sim \ind(x)} [f(S)]$.\footnote{In fact, it suffices for $\D$ to satisfy the (weaker) property of increasing expectations for matroid rank functions (or, equivalently, their weighted sums).} It follows that $\D$ is $\frac{e}{e-1}$-uncontentious.
\end{proposition}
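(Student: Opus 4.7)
The plan is to invoke Theorem~\ref{thm:characterize_uncontentious}, specifically the equivalence between conditions (a) and (b), and then chain two inequalities: one from the increasing submodular expectations hypothesis, and one from the known $\frac{e}{e-1}$-competitiveness of contention resolution for product distributions with marginals in $\P(\M)$.

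More concretely, I would fix an arbitrary weight vector $w \in \RR_+^\E$ and aim to show
\[\Ex_{R \sim \D}[\rank_w(R)] \geq \tfrac{e-1}{e}\Ex_{R \sim \D}[w(R)],\]
which by Theorem~\ref{thm:characterize_uncontentious}(b) suffices. First, I would observe that the weighted rank function $\rank_w$ is submodular, so the increasing submodular expectations property (applied to $f = \rank_w$, exactly the weaker hypothesis flagged in the footnote) yields
\[\Ex_{R \sim \D}[\rank_w(R)] \;\geq\; \Ex_{S \sim \ind(x)}[\rank_w(S)].\]
Next, since $\ind(x)$ is a product distribution with marginals $x \in \P(\M)$, the prior result of \citet{CRS} combined with the (a)$\Rightarrow$(b) direction of Theorem~\ref{thm:characterize_uncontentious} gives
\[\Ex_{S \sim \ind(x)}[\rank_w(S)] \;\geq\; \tfrac{e-1}{e}\Ex_{S \sim \ind(x)}[w(S)] \;=\; \tfrac{e-1}{e}\sum_{i \in \E} w_i x_i.\]
Finally, because $\D$ has the same marginals $x$, we have $\Ex_{R \sim \D}[w(R)] = \sum_i w_i x_i$, and chaining the two displays completes the argument.

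There is no substantive obstacle here: the claim is essentially a one-line consequence of chaining the hypothesis with the product-distribution case, once one recognizes that Theorem~\ref{thm:characterize_uncontentious}(b) is the right certificate to aim for (rather than trying to construct a CRM directly). The only minor thing to note is that we only need the increasing expectations property for the single submodular function $\rank_w$ (for each $w$), justifying the parenthetical remark in the footnote that matroid rank functions suffice.
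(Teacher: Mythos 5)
Your proof is correct and is essentially the same as the paper's: the paper's one-sentence argument also combines Theorem~\ref{thm:characterize_uncontentious}(b), the increasing submodular expectations hypothesis applied to $\rank_w$, and the known fact that $\ind(x)$ is $\frac{e}{e-1}$-uncontentious. You have simply spelled out the chain of inequalities explicitly.
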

\begin{proof}
  This is immediate by combining Theorem~\ref{thm:characterize_uncontentious} (b) with the property of increasing submodular expectations and the fact that $\ind(x)$ is $\frac{e}{e-1}$-uncontentious.
\end{proof}

As shown in \cite{CVZ10}, the property of increasing submodular expectations is stronger than the following standard notion of negative correlation for $R \sim \D$: For all sets $T$, $\Pr[T \sse R] \leq \prod_{i \in T} \Pr[ i \in R]$ and $\Pr[T \sse \bar{R}] \leq \prod_{i \in T} (1- \Pr[ i \in R])$.\footnote{A natural question is whether negative correlation suffices for the distribution to be $\frac{e}{e-1}$-uncontentious. This is open as far as we know.}  However, we can show that there are distributions exhibiting positive correlation which are also uncontentious for specific matroids. We now list some examples of uncontentious distributions exhibiting positive correlation.

\begin{example}
  Let $\M$ be a $k$-uniform matroid with $n$ elements where $2 \leq k \leq n$. Let the random set $R$ be empty with probability $1/2$, and a uniformly random base of $\M$ otherwise. 

It is clear that $R$ is $1$-uncontentious, since it is supported on the family of independent sets. However, for each distinct pair of elements $i$ and $j$, we have $\Pr [i \in R] = \Pr [j \in R] = \frac{k}{2n}$, yet $\Pr [ i \in R | j \in R] = \Pr [ j \in R | i \in R] = \frac{k-1}{n-1} > \frac{k}{2n}$.
\end{example}

The next example will feature repeatedly in this paper, since it is the random set of improving elements for the rank $1$ matroid.

\begin{example}\label{ex:1}
  Consider the $1$-uniform matroid with elements $[n]=\set{1,\ldots,n}$. For $k=0,1,\ldots,n-1$, let $R=\set{1,\ldots,k}$ with probability $2^{-(k+1)}$, and let $R=[n]$ with remaining probability $2^{-n}$. The random set $R$ is $2$-uncontentious, as evidenced by the CRM $\phi$ with $\phi(\set{1,\ldots,k})=\set{k}$ and $\phi(\emptyset)=\emptyset$, and a simple calculation. Note the positive correlation between elements $i < j$: $\Pr[j \in R] = 2^{-j}$, and $\Pr[ j \in R | i \in R] = 2^{i-j} > \Pr[ i \in R]$.
\end{example}
As a generalization of the previous example, we get the following.
\begin{example}
  Let $\M$ be a matroid with $m$ pairwise-disjoint bases $B_1,\ldots,B_m$. For each $k=1,\ldots,m-1$, let $R= \union_{i=1}^k B_i$ with probability $2^{-k}$, and let $R=\union_{i=1}^m B_m$ with the remaining probability $2^{1-m}$. The set $R$ is $2$-uncontentious, as evidenced by the CRM $\phi(\union_{i=1}^k B_i) = B_k$. However, for $e_i \in B_i$ and $e_j \in B_j$ with $i<j$, we have $\Pr[e_j \in R] = 2^{1-j}$ and  $\Pr [ e_j \in R | e_i \in R] = 2^{i-j} > \Pr[ e_j \in R]$.
\end{example}

\subsection{Contention Resolution Schemes, Universality, and Prior Dependence}

 \def\DD{\mathbb{D}}

A \emph{contention resolution scheme (CRS)} $\Phi$ for a matroid $\M=(\E,\I)$ and class of distributions $\DD \sse \Delta(2^\E)$ is an algorithm which takes as input a (possibly partial) description of a distribution $\D \in \DD$ and a sample $R \sim \D$, and outputs $S \in \I$ satisfying $S \sse R$. In effect, $\Phi$ is a collection of contention resolution maps $\phi_\D$, one for each $\D \in \DD$. In much of the prior work on contention resolution, $\DD$ was taken to be the class of product distributions with marginals in $\P(\M)$, and each $\D \in \DD$ is described completely via its marginals $x \in \P(\M)$. In such a setting, the notion of a CRS offers little beyond the notion of a CRM, as each distribution gets its own dedicated CRM. More generally, however, we allow $\DD$ to be an arbitrary class of distributions, and we allow the description to be partial and/or random; for example, $\D$ may be described by $m$ independent samples from $\D$.

Next, we set the stage by  defining some desirable contention resolution schemes, and establish some limitations on their existence. 

\begin{definition}
  Fix a matroid. For $\beta \geq \alpha > 1$, an \emph{$\alpha$-universal $\beta$-competitive CRS} is a CRS which is $\beta$-competitive for the class of $\alpha$-uncontentious distributions. 
\end{definition}

By definition, there exists an (offline) $\alpha$-universal $\alpha$-competitive CRS for every $\alpha$ and every matroid. The notion of a universal scheme becomes more interesting when we restrict dependence on the prior, as per the following definitions. 

\begin{definition}
  Fix a matroid. A contention resolution scheme $\Phi$ is said to be \emph{prior-independent} if it is not given a complete description of $\D$ as input, but rather is given a set of independent samples from $\D$. When the number of samples is $m$, we say $\Phi$ is a \emph{prior-independent $m$-sample scheme}. The number of samples $m(\cdot)$ may be function of the size of the matroid. If $m=0$, we say the scheme is \emph{oblivious}: the scheme consists of a single contention resolution map.
\end{definition}

We now show that, if a scheme is universal, it cannot be prior-independent with any finite number of samples, even for very simple matroids. 
\begin{theorem}\label{thm:prior_independent_impossible}
  Let $\M$ be the 1-uniform matroid on $n$ elements. For every finite $m$, and every $1 < \alpha \leq \beta < n$,  there does not exist a $\beta$-competitive $\alpha$-universal  CRS for $\M$ which is prior independent with $m$ samples.
\end{theorem}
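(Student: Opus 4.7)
The plan is to exhibit, for any $\alpha \in (1, n)$, a family of $\alpha$-uncontentious distributions indexed by permutations of $[n]$ that jointly force $\beta \ge n$ on any finite-sample scheme. For $r \in (0, 1)$ and a permutation $\sigma$ of $[n]$, I define the ``telescoping'' distribution $\D_\sigma^r$ by
\[
\Pr[R = \{\sigma(1), \ldots, \sigma(k)\}] = (1-r)r^k \text{ for } k = 0, \ldots, n-1, \qquad \Pr[R = [n]] = r^n.
\]
(The $r = 1/2$ case recovers Example~\ref{ex:1} relabeled by $\sigma$.) A short calculation gives $\Pr[\sigma(i) \in R] = r^i$, and applying Theorem~\ref{thm:characterize_uncontentious}(c) --- with the worst-case $F$ being $[n]$ itself --- shows that $\D_\sigma^r$ is exactly $(1 + r + \cdots + r^{n-1})$-uncontentious. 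This expression is continuous and monotonically increasing in $r$ from $1$ to $n$, so for any target $\alpha \in (1, n)$ there exists some $r(\alpha) \in (0, 1)$ such that $\D_\sigma^r$ is $\alpha$-uncontentious for every $r \in (0, r(\alpha)]$.

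Suppose toward contradiction that $\Phi$ is a $\beta$-competitive $\alpha$-universal prior-independent $m$-sample CRS. Define $q_i := \Pr[\Phi(\emptyset^m, [n]) = \{i\}]$; since $\Phi(\emptyset^m, [n])$ must be independent in the $1$-uniform matroid, $\sum_i q_i \le 1$, and crucially $q_i$ depends only on $\Phi$, not on $r$ or $\sigma$. Fix $\sigma$ and $r \in (0, r(\alpha)]$: under $\D_\sigma^r$, element $\sigma(n)$ appears in $R$ only when $R = [n]$, an event of probability $r^n$, so $\beta$-competitiveness at element $\sigma(n)$ simplifies to
\[
\Ex_{s \sim (\D_\sigma^r)^m}\bigl[\Pr[\Phi(s, [n]) = \{\sigma(n)\}]\bigr] \ge 1/\beta.
\]
Decomposing the left side according to the event $s = \emptyset^m$ (which has probability $(1-r)^m$, since a single sample is nonempty with probability exactly $r$) and bounding $\Pr[\Phi(s, [n]) = \{\sigma(n)\}] \le 1$ off that event gives
\[
(1-r)^m \, q_{\sigma(n)} + \bigl(1 - (1-r)^m\bigr) \ge 1/\beta,
\]
hence $q_{\sigma(n)} \ge [\, 1/\beta - 1 + (1-r)^m\,]/(1-r)^m$. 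This bound holds for every $r \in (0, r(\alpha)]$, its right side tends to $1/\beta$ as $r \to 0^+$, and $q_{\sigma(n)}$ does not depend on $r$, so $q_{\sigma(n)} \ge 1/\beta$.

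Letting $\sigma$ range over all permutations makes $\sigma(n)$ take every value in $[n]$, so $q_i \ge 1/\beta$ for all $i \in [n]$; summing gives $n/\beta \le \sum_i q_i \le 1$, which forces $\beta \ge n$ and contradicts $\beta < n$. The subtlety to handle carefully is producing $\alpha$-uncontentious distributions uniformly for every $\alpha \in (1, n)$ --- in particular for $\alpha < 2$, where Example~\ref{ex:1} (being exactly $2$-uncontentious) does not lie in the class --- and the $r$-parameter in the telescoping family is precisely what addresses this while simultaneously offering the freedom to take $r \to 0^+$, which is what makes samples essentially uninformative about the critical hidden element $\sigma(n)$.
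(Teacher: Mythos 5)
Your proof is correct, and it takes a noticeably different route from the paper's. The paper argues in two stages: a first lemma shows that any prior-independent $m$-sample universal scheme induces an \emph{oblivious} universal scheme (by mixing the target distribution with a point mass on $\emptyset$ and letting the mixing weight tend to zero, so the sample is almost surely $\emptyset^m$), and a second lemma shows that no oblivious $\beta$-competitive $\alpha$-universal CRS exists on the 1-uniform matroid, via an existential pigeonhole: some element $i$ must have $\Pr[i \in \phi([n])] \le 1/n$, and a distribution mixing singletons $\{j\}$, $j \neq i$, with a small mass on $[n]$ then defeats $\phi$. You instead run a direct, single-shot argument: one parameter $r$ simultaneously governs the degree of uncontentiousness (via $1 + r + \cdots + r^{n-1}$, checked through Theorem~\ref{thm:characterize_uncontentious}(c)) and the probability $(1-r)^m$ that all samples are empty, and you derive the universal lower bound $q_i \geq 1/\beta$ for all $i$ before summing to reach the contradiction. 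Your telescoping family $\D^r_\sigma$ is also a different hard instance than the paper's singleton-plus-$[n]$ mixture --- it is exactly the improving-elements distribution on the 1-uniform matroid (with $p = 1-r$), tying the impossibility result more tightly to the paper's later themes. One small presentational gap: the claim that $F = [n]$ is the worst case in condition (c) deserves a line of justification --- for any $F$ hitting indices $i_1 < \cdots < i_k$, the ratio is $\sum_{j=1}^k r^{i_j - i_1}$, maximized at $F = [n]$ since $r < 1$ --- but the claim itself is correct, and the argument goes through.
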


To prove Theorem~\ref{thm:prior_independent_impossible}, we first show that a prior-independent universal scheme implies the existence of an oblivious universal scheme; then we show that an oblivious universal scheme does not exist for the uniform matroid. This is captured in the two following lemmas.

\begin{lemma}
  Fix a matroid $\M$. If there exists a $\beta$-competitive $\alpha$-universal CRS $\Phi$ which is prior-independent with $m$ samples, then there exists an oblivious $\beta$-competitive $\alpha$-universal scheme $\Phi'$. 
\end{lemma}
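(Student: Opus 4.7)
The plan is to define $\Phi'$ as the oblivious CRM obtained by feeding $\Phi$ the $m$ ``fake'' samples equal to the empty set; that is, $\Phi'(R) := \Phi(\emptyset,\ldots,\emptyset, R)$. Since $\Phi$ is a valid CRS we automatically have $\Phi'(R) \sse R$ and $\Phi'(R) \in \I$ for every $R$, so the only nontrivial task is to show that $\Phi'$ is $\beta$-competitive on every $\alpha$-uncontentious distribution $\D$.

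The key observation I would use is that the point distribution $\delta_\emptyset$ is trivially $1$-uncontentious, hence $\alpha$-uncontentious for every $\alpha \geq 1$. Combined with Proposition~\ref{prop:uncontentious_mixing}, this implies that for any $\alpha$-uncontentious $\D$ and any $\eps \in (0,1]$, the mixture $\D_\eps := (1-\eps)\,\delta_\emptyset + \eps\,\D$ is also $\alpha$-uncontentious, with marginals exactly $\eps$ times those of $\D$. Since $\Phi$ is $\beta$-competitive on $\D_\eps$ when given $m$ i.i.d.\ samples from it, writing $p_i(\D) := \Pr_{R \sim \D}[i \in R]$, we obtain for every $i \in \E$ the inequality
\[
\Pr_{\vec F \sim \D_\eps^m,\, R \sim \D_\eps}\bigl[i \in \Phi(\vec F, R)\bigr] \;\geq\; \tfrac{\eps}{\beta}\, p_i(\D).
\]

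Next I would split the left-hand side according to whether all $m$ samples equal $\emptyset$. The event $\vec F = \vec \emptyset$ has probability $(1-\eps)^m$, and on this event $\Phi(\vec F, R) = \Phi'(R)$; its contribution is therefore $(1-\eps)^m \cdot \eps \cdot \Pr_{R \sim \D}[i \in \Phi'(R)]$ (using $\Phi'(\emptyset) = \emptyset$). The complementary event contributes at most $(1-(1-\eps)^m)\cdot \eps\, p_i(\D)$ by the crude bound $\Phi(\vec F, R) \sse R$. Dividing by $\eps$ and solving for $\Pr_{R \sim \D}[i \in \Phi'(R)]$ yields
\[
\Pr_{R \sim \D}[i \in \Phi'(R)] \;\geq\; \left(1 \;-\; \frac{1 - 1/\beta}{(1-\eps)^m}\right) p_i(\D).
\]
Since $\Phi'$ does not depend on $\eps$, letting $\eps \to 0$ so that $(1-\eps)^m \to 1$ gives $\Pr_{R \sim \D}[i \in \Phi'(R)] \geq \tfrac{1}{\beta}\, p_i(\D)$, as required.

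The only step requiring any real insight is the choice to perturb $\D$ by mixing in $\delta_\emptyset$: this single observation lets us force all $m$ samples to be simultaneously empty with probability arbitrarily close to $1$, while leaving the marginals of $\D_\eps$ a uniform rescaling of those of $\D$. Once this perturbation trick is in hand, everything else is routine: conditioning on the ``good'' event, a trivial $\Phi(\vec F, R) \sse R$ upper bound on the ``bad'' event, and a limit as $\eps \to 0$. No delicate property of the scheme or the matroid is needed beyond closure of $\alpha$-uncontentiousness under mixtures.
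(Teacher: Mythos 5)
Your proposal is correct and follows essentially the same route as the paper: mix $\D$ with the point mass on $\emptyset$ (which is $1$-uncontentious, so the mixture stays $\alpha$-uncontentious by Proposition~\ref{prop:uncontentious_mixing}), condition on all $m$ samples being empty, and let $\eps \to 0$. Your bound $\Pr_{R\sim\D}[i \in \Phi'(R)] \geq \bigl(1 - \tfrac{1-1/\beta}{(1-\eps)^m}\bigr)p_i(\D)$ is exactly the reciprocal of the paper's $\beta' = \frac{(1-\eps)^m}{1/\beta + (1-\eps)^m - 1}$, and you spell out the ``simple calculation'' that the paper leaves implicit.
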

\begin{proof}
  Let $\D$ be any $\alpha$-uncontentious distribution. Let $\eps \in (0,1)$, and  let $\D'= \D'(\epsilon)$ be the mixture of $\D$ with the point distribution on the empty set with proportions $\epsilon$ and $1-\epsilon$ respectively. By Proposition \ref{prop:uncontentious_mixing} and the fact that the point distribution on the empty set is $1$-uncontentious, it follows that $\D'$ is $\alpha$-uncontentious. 

The CRS $\Phi$ induces a CRM $\phi_{\D'}$ on the distribution $\D'$, and by assumption $\phi_{\D'}$ is  $\beta$-competitive for $\D'$.  Since $\Phi$ is prior-independent with $m$ samples, its induced CRM  $\phi_{\D'}$ is a mixture over CRMs $\phi_S$, where $S=(S_1,\ldots,S_m)$ is a random vector of $m$ samples from $\D'$. With probability at least $(1-\epsilon)^m$, we have $S=\emptyset^m:=(\emptyset, \ldots, \emptyset)$. For $\phi_{\D'}$ to be $\beta$-competitive, in particular when with probability $\epsilon$ it is queried with a draw $R \sim D$,  a simple calculation shows that $\phi_{\emptyset^m}$ must be $\beta'$-competitive for $\D$ for $\beta'=\frac{(1-\epsilon)^m}{1/\beta + (1-\epsilon)^m - 1}$. As $\epsilon$ tends to $0$, $\beta'$ tends to $\beta$, and a basic analytic argument implies that $\phi_{\emptyset^m}$ is $\beta$-competitive for $\D$. Since $\D$ was chosen arbitrarily among $\alpha$-uncontentious distributions, and $\phi_{\emptyset^m}$ does not depend on $\D$, it follows that the oblivious scheme $\Phi'$ with $\phi'_\D = \phi_{\emptyset^m}$ for every $\D$ is $\beta$-competitive and $\alpha$-universal. \end{proof}

\begin{lemma}
The 1-uniform matroid with $n$ elements does not admit an oblivious $\beta$-competitive $\alpha$-universal CRS for any $1 < \alpha \leq \beta < n$.
\end{lemma}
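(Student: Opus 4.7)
The plan is to exhibit, for each element $i \in [n]$, an $\alpha$-uncontentious ``pyramidal'' distribution $\D_i$ supported on a nested chain ending at $[n]$, designed so that $i$ is active only when the active set equals $[n]$. Then $\beta$-competitiveness of the oblivious $\phi$ on each such $\D_i$ will force $\Pr[\phi([n]) = \{i\}] \ge 1/\beta$; summing over $i \in [n]$ and using that $\phi([n])$ outputs at most a single element will yield $n/\beta \le 1$, contradicting $\beta < n$.

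Concretely, I would fix $r = 1 - 1/\alpha \in (0,1)$ and some $c \in (0,1]$, and for each permutation $\pi$ of $[n]$ define the nested chain $R_k^\pi = \{\pi(1),\ldots,\pi(k)\}$ for $k=0,\ldots,n$. The distribution $\D_\pi$ places probabilities $q_0 = 1-c$, $q_k = cr^{k-1}(1-r)$ for $1 \le k \le n-1$, and $q_n = cr^{n-1}$ on the atoms $R_0^\pi, \ldots, R_n^\pi$. A direct check gives $\sum_k q_k = 1$, each $q_k \ge 0$, and marginals $\Pr_{R \sim \D_\pi}[\pi(k) \in R] = \sum_{\ell \ge k} q_\ell = cr^{k-1} =: y_k$.

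The main technical step is to verify $\alpha$-uncontentiousness of $\D_\pi$ via Theorem~\ref{thm:characterize_uncontentious}(c). For $F = \{\pi(i_1),\ldots,\pi(i_m)\} \sse [n]$ with $i_1 < \cdots < i_m$, linearity gives $\Ex[|R \intersect F|] = \sum_{j=1}^m y_{i_j}$, while the nested structure gives $\Ex[\rank(R \intersect F)] = \Pr[R \intersect F \neq \emptyset] = \Pr[\pi(i_1) \in R] = y_{i_1}$ (since $R \intersect F \neq \emptyset$ iff $R \supseteq R_{i_1}^\pi$). Condition (c) then reduces to $\sum_{j=1}^m y_{i_j} \le \alpha y_{i_1}$, which I would establish via the geometric bound $\sum_{\ell \ge i_1} y_\ell \le y_{i_1}/(1-r) = \alpha y_{i_1}$, uniformly in $F$.

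To conclude, fix any oblivious $\alpha$-universal $\beta$-competitive CRS $\phi$. For each $i \in [n]$, pick a permutation $\pi_i$ with $\pi_i(n) = i$, so that in $\D_{\pi_i}$ element $i$ is active only on the atom $R = [n]$. Then $\Pr[i \in R] = q_n$ and $\Pr[i \in \phi(R)] = q_n \Pr[\phi([n]) = \{i\}]$, so $\beta$-competitiveness on $\D_{\pi_i}$ yields $\Pr[\phi([n]) = \{i\}] \ge 1/\beta$. Summing over $i$ gives $n/\beta \le \sum_i \Pr[\phi([n]) = \{i\}] = \Pr[\phi([n]) \neq \emptyset] \le 1$, so $\beta \ge n$, a contradiction. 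The main obstacle is the subset-by-subset check of condition (c); once the geometric choice $r = 1 - 1/\alpha$ is in hand, every $F$ is handled by a single geometric-series bound.
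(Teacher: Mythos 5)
Your proof is correct, and it arrives at the contradiction by a genuinely different route than the paper. The paper's argument is a pigeonhole one: since $\phi([n])$ is independent in the $1$-uniform matroid (so has at most one element), some single element $i$ satisfies $\Pr[i \in \phi([n])] \le 1/n$; the paper then cooks up one flat distribution (singletons $\{j\}$ for $j \ne i$ plus the full set $[n]$, with weights chosen so that the explicit CRM $\phi'(\{j\}) = \{j\}$, $\phi'([n]) = \{i\}$ certifies $\alpha$-uncontentiousness directly) and observes $\phi$ is at best $n$-competitive on it. You instead construct a family of $n$ nested ``pyramidal'' chain distributions, one per element $i$, each making $i$ active only on the atom $[n]$; you verify $\alpha$-uncontentiousness of every chain uniformly via condition (c) of Theorem~\ref{thm:characterize_uncontentious} and a geometric-series bound, and then sum the resulting lower bounds $\Pr[\phi([n]) = \{i\}] \ge 1/\beta$ over all $i$ to get $n/\beta \le 1$. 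Both hinge on the same key trick --- isolate an element that is active only when $R = [n]$ --- but the paper's pigeonhole step lets it get away with a single, simpler distribution and a hands-on CRM, whereas your averaging step avoids having to single out a worst element at the cost of checking uncontentiousness of a slightly richer (chain) family, which you do cleanly through the characterization theorem. Your chain distributions are also a pleasing echo of the improving-element distribution of Example~\ref{ex:1}, which the paper's flat construction is not.
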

\begin{proof}
  Let $[n]$ be the ground set of the matroid, and fix $\alpha$ such that $1 < \alpha < n$. An oblivious CRS consists of a single CRM $\phi$. There exists at least one element $i \in [n]$ such that $\Pr[ i \in \phi([n])] \leq 1/n$. Let $\eps= \alpha - 1 >0$, and consider the following random set $R$: For each $j \in [n] \sm i$ we have $R = \set{j}$ with probability $\frac{1}{n-1 + \epsilon}$, and $R= [n]$ with the remaining probability $\frac{\eps}{n-1 +\eps}$. The random set $R$ is $\alpha$-uncontentious: consider the CRM $\phi'$ with $\phi'(\set{j}) = j$ for $j \neq i$, and $\phi'([n]) = i$. However, our original CRM $\phi$ is no better than $n$-competitive for $R$, since its probability of selecting $i$ is no more than $\frac{1}{n} \Pr[R = [n]] = \frac{1}{n} \Pr[ i \in R]$.
\end{proof}


\section{An Online Universal CRS from a Secretary Algorithm}

We show that competitive matroid secretary algorithms imply that every contention resolution scheme can be made online, in the random arrival model, without much loss.
\begin{theorem}\label{thm:sec_to_CRS}
 Suppose that there is a $\gamma$-competitive online algorithm for the secretary problem on matroid $\M$. It follows that every $\alpha$-uncontentious distribution admits an online $\gamma \alpha$-competitive contention resolution map. In other words, for every $\alpha$ there exists an online $\gamma \alpha$-competitive $\alpha$-universal contention resolution scheme for $\M$.
\end{theorem}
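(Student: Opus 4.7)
The plan is to mirror the LP duality argument used in the (b)$\Rightarrow$(a) direction of Theorem~\ref{thm:characterize_uncontentious}, but with the space of CRMs restricted to those implementable online in the random-arrival model. Specifically, consider the LP with variables $\beta$ and $\lambda_\phi \ge 0$ (one per deterministic online CRM $\phi$), maximizing $\beta$ subject to $\sum_\phi \lambda_\phi \Pr[i \in \phi(R)] \ge \beta \, x_i$ for each $i \in \E$ and $\sum_\phi \lambda_\phi = 1$, where $x = x(\D)$. Since any convex combination of online CRMs is itself implementable as a single randomized online CRM --- by sampling the mixture component at the outset, before the first arrival --- an online $\alpha\gamma$-competitive CRM for $\D$ exists iff this LP has value at least $1/(\alpha\gamma)$. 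Dualizing exactly as in the proof of Theorem~\ref{thm:characterize_uncontentious}, this is in turn equivalent to showing that for every $w \in \RRp^\E$ there exists an online CRM $\phi_w$ with $\Ex[w(\phi_w(R))] \ge \frac{1}{\alpha\gamma}\Ex[w(R)]$.

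To produce such a $\phi_w$ for a given $w \ge 0$, I will feed the random-arrival stream into the assumed $\gamma$-competitive matroid secretary algorithm $\A$ with the modified weights $w'_i := w_i \cdot \mathbf{1}[i \in R]$. These weights are observable at the arrival time of $i$, since activity $\mathbf{1}[i \in R]$ is revealed on arrival. Let $T \in \I$ be $\A$'s output, and set $\phi_w(R) := T \cap R$, which is independent (as a subset of $T$) and contained in $R$ by construction. Because $w'_i = 0$ for $i \notin R$, we have $\rank_{w'}(\E) = \rank_w(R)$ and $w'(T) = w(T \cap R) = w(\phi_w(R))$. The secretary guarantee therefore gives $\Ex[w(\phi_w(R))] = \Ex[w'(T)] \ge \frac{1}{\gamma}\Ex[\rank_w(R)]$, and invoking property (b) of Theorem~\ref{thm:characterize_uncontentious} with weight vector $w$ yields $\Ex[\rank_w(R)] \ge \frac{1}{\alpha}\Ex[w(R)]$. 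Chaining the two inequalities delivers the desired $\Ex[w(\phi_w(R))] \ge \frac{1}{\alpha\gamma}\Ex[w(R)]$, completing the reduction.

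The main subtlety I expect to encounter is the LP duality step itself: one must formalize the (a priori very large) space of online CRMs --- for instance by restricting to algorithms with finite-support internal randomness --- so that strong LP duality applies, and then verify that the optimal mixture over online CRMs can be realized as a single randomized online algorithm. These are essentially the same technicalities implicit in the offline argument in the proof of Theorem~\ref{thm:characterize_uncontentious} and can be handled identically. Once they are in place, the rest of the proof is the short chaining of the matroid secretary guarantee with part (b) of the uncontentious characterization outlined above, so I expect this reduction to be considerably simpler than the paper's ``improving elements are uncontentious'' result which goes in the reverse direction.
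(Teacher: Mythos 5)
Your proposal is correct and takes essentially the same approach as the paper: both reduce to constructing, for each weight vector $w$, an online CRM $\phi_w$ that feeds the zeroed-out weights $w_i \cdot \mathbf{1}[i \in R]$ into the secretary algorithm $\A$, chain the secretary guarantee with part (b) of Theorem~\ref{thm:characterize_uncontentious}, and then invoke LP duality / a separating-hyperplane argument over the convex set of achievable inclusion-probability vectors to produce the final mixture over the $\phi_w$'s. The only cosmetic differences are that you intersect $\A$'s output with $R$ rather than assuming WLOG that $\A$ ignores zero-weight elements, and you phrase the duality step over all online CRMs rather than, as the paper does, over the finite-dimensional set $\Y = \convexhull\{y(w) : w \in \RRp^\E\}$ of inclusion vectors achieved by the $\phi_w$'s --- the latter sidesteps the finiteness technicality you flag, since the separating hyperplane theorem applies directly in $[0,1]^\E$.
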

We interpret the above theorem as follows: the design of competitive universal online schemes is a necessary technical hurdle towards resolving the matroid secretary conjecture.

We now proceed with proving Theorem~\ref{thm:sec_to_CRS}. Let $\M=(\E,\I)$, and let $\D \in \Delta(2^\E)$. Recall that an online CRM operates in the following model:  a set of active elements $R \sim \D$ and a random permutation $\Pi$ are (independently) sampled by nature, then $\E$ arrive online in order $\Pi$. When $i \in \E$ arrives, it is revealed whether $i \in R$, and if so the online CRS must determine whether to select $i$. The online CRM must only select an independent subset of $R$.

Suppose we are given a secretary algorithm $\A$ for $\M$ with competitive ratio $\gamma$. Without loss of generality, we assume that $\A$ selects only non-zero weight elements. Consider the following online CRM $\phi_w$ for $\M$, parametrized by a weight vector $w \in \RRp^\E$. When element $i$ arrives, if $i \in R$ then it is presented to $\A$ with weight $w_i$, and if $i \not\in R$ then it is presented to $\A$ with weight $0$. $\phi_w$ selects precisely the elements selected by $\A$.

\begin{lemma}
  For every distribution $\D$, we have \[ \Ex_{R \sim \D}[w(\phi_w(R))] \geq \frac{1}{\gamma} \Ex_{R \sim \D}[\rank_w(R)] \]
\end{lemma}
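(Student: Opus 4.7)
The plan is to reduce this to the competitive guarantee of $\A$ pointwise in $R$. The key observation is that $\Pi$ is independent of $R$, so after conditioning on $R = R_0$, the online execution of $\phi_w$ looks exactly like running the secretary algorithm $\A$ on the matroid $\M$ with a fixed (though $R_0$-dependent) weight vector, in uniformly random order. Once that is set up, the conclusion follows by averaging over $R$.

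Concretely, I would fix any realization $R_0 \sse \E$ and define the modified weight vector $w' \in \RRp^\E$ by $w'_i = w_i \one_{R_0}(i)$. Conditional on $R = R_0$, the construction of $\phi_w$ amounts to feeding $\A$ the elements of $\E$ one by one in the order $\Pi$, with each element $i$ announced as having weight $w'_i$. Since $\Pi$ is independent of $R$, its conditional law given $R = R_0$ is still uniform over permutations of $\E$, so this is a legitimate instance of the matroid secretary problem on $(\M, w')$. Applying the $\gamma$-competitive guarantee of $\A$ gives
\[
\Ex_\Pi\bigl[w'(\phi_w(R_0)) \,\big|\, R = R_0\bigr] \;\geq\; \frac{1}{\gamma} \max_{I \in \I} w'(I).
\]

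To turn this into the desired inequality, I would use two elementary identifications. First, $\phi_w(R_0) \sse R_0$ by construction (since $\A$ only selects nonzero-weight elements), so $w'(\phi_w(R_0)) = w(\phi_w(R_0))$. Second, since $w'_i = 0$ outside $R_0$, the maximum-weight independent set under $w'$ is realized inside $R_0$ and equals $\rank_w(R_0)$. Substituting these and taking expectation over $R \sim \D$ via the tower property yields
\[
\Ex_{R \sim \D}\bigl[w(\phi_w(R))\bigr] \;\geq\; \frac{1}{\gamma}\, \Ex_{R \sim \D}\bigl[\rank_w(R)\bigr],
\]
as claimed.

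The only subtlety, and what I see as the one point worth double-checking, is the legitimacy of applying $\A$'s guarantee when the weights passed to it are themselves random (via $R$) rather than adversarial. This is not actually an obstacle because the secretary guarantee holds for every adversarial weight vector, and independence of $R$ and $\Pi$ lets us condition on $R = R_0$ and apply the guarantee to the now-deterministic weights $w'$; the outer expectation over $R$ then respects the inequality by linearity. No properties of $\D$ beyond being a distribution on $2^\E$ are used, which is consistent with the lemma's statement.
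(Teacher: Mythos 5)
Your proof is correct and is essentially identical to the paper's: both condition on $R$, define the modified weights $w'$ that zero out inactive elements, invoke the $\gamma$-competitive guarantee of $\A$ on the resulting secretary instance, and then identify $w'(\phi_w(R)) = w(\phi_w(R))$ and $\rank_{w'}(\M) = \rank_w(R)$ before averaging over $R$.
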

\begin{proof}
  Condition on the choice of $R$, and let $w'_i=w_i$ if $i \in R$ and $w'_i=0$ otherwise. $\E$ are presented to $\A$ in a uniformly random order, with weights $w'_i$, and $\phi_w(R) \sse R$ is the set of elements selected by $\A$. Since $\A$ is $\gamma$-competitive, it follows that $\Ex[w'(\phi_w(R))] \geq \frac{1}{\gamma} \rank_{w'}(\M)$. Since $w'(\phi_w(R))= w(\phi_w(R))$ and $\rank_{w'}(\M) = \rank_w(R)$, we are done.
\end{proof}
\begin{lemma}\label{lem:weight_based_crm}
  If $\D$ is $\alpha$-uncontentious, then \[ \Ex_{R \sim \D}[w(\phi_w(R))] \geq \frac{1}{\gamma \alpha} \Ex_{R \sim \D}[w(R)] \]
\end{lemma}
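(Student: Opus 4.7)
The plan is to chain together the previous lemma with the equivalence (a) $\Leftrightarrow$ (b) of Theorem~\ref{thm:characterize_uncontentious}. There is essentially nothing else to do: the previous lemma bounds $\Ex[w(\phi_w(R))]$ below by $\frac{1}{\gamma} \Ex[\rank_w(R)]$, and the uncontentious hypothesis exactly controls the ratio between $\Ex[\rank_w(R)]$ and $\Ex[w(R)]$.

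More concretely, first I would invoke the previous lemma verbatim to obtain
\[
\Ex_{R \sim \D}[w(\phi_w(R))] \;\geq\; \frac{1}{\gamma} \, \Ex_{R \sim \D}[\rank_w(R)].
\]
Then, since $\D$ is $\alpha$-uncontentious, condition (b) of Theorem~\ref{thm:characterize_uncontentious} applied with the nonnegative weight vector $w$ gives
\[
\Ex_{R \sim \D}[\rank_w(R)] \;\geq\; \frac{1}{\alpha} \, \Ex_{R \sim \D}[w(R)].
\]
Multiplying the two inequalities (both sides are nonnegative) yields the desired bound $\Ex[w(\phi_w(R))] \geq \frac{1}{\gamma \alpha} \Ex[w(R)]$.

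There is no real obstacle here; the lemma is a one-line composition. The only mild subtlety to double-check is that the weight vector $w$ used to parametrize $\phi_w$ is indeed nonnegative (so that Theorem~\ref{thm:characterize_uncontentious}(b) applies), which is already guaranteed by the setup preceding the lemma ($w \in \RRp^\E$). Note that the overall Theorem~\ref{thm:sec_to_CRS} will then follow because the lemma says exactly that the online CRM $\phi_w$ is $\gamma\alpha$-competitive element-wise in a weighted sense; invoking it separately for each $w = \one_{\{i\}}$ (or more cleanly, applying the LP-duality argument in the proof of Theorem~\ref{thm:characterize_uncontentious}, (b)$\Rightarrow$(a)) upgrades this to a $\gamma\alpha$-competitive online CRM in the per-element sense required by the definition.
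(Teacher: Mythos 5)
Your proof is correct and matches the paper's one-line argument exactly: combine the previous lemma's bound $\Ex[w(\phi_w(R))] \geq \frac{1}{\gamma}\Ex[\rank_w(R)]$ with condition (b) of Theorem~\ref{thm:characterize_uncontentious} applied to $w$. The closing remarks about how Theorem~\ref{thm:sec_to_CRS} then follows go beyond the lemma itself, but they are consistent with the paper's subsequent separating-hyperplane argument (note that simply taking $w=\one_{\{i\}}$ for each $i$ would produce a different map $\phi_w$ per element and so does not directly yield a single CRM; the convexification step is needed, as you acknowledge).
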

\begin{proof}
  Combining the previous lemma with Theorem \ref{thm:characterize_uncontentious} (b).
\end{proof}

Recall that we are assuming for now that we know the $\alpha$-uncontentious distribution $\D$, and we can design an online CRM $\phi_\D$ accordingly. $\phi_\D$ will be a random mixture of the maps $\phi_w$ described above; in particular, we will show that there exists a distribution $\W=\W(\D)$ over weight vectors such that the (randomized) online CRM $\phi_\W$ which samples $w \sim \W$ upfront then invokes $\phi_w$ is an online $\gamma \alpha$-CRM for $\D$.

For each element $i \in \E$, let $x_i = \Pr_{R \sim \D} [i \in R]$. For each weight vector $w$ and $i \in \E$, let $y_i(w) = \Pr_{R \sim \D} [i \in \phi_w(R)]$. For each distribution $\W$ over weight vectors and element $i \in \E$, let $y_i(\W) = \Pr_{R \sim \D}[i \in \phi_\W(R)] = \Pr_{R \sim \D, w \sim \W}[ i \in \phi_w(R)]$. Let $\Y = \set{ y(\W) : \W \in \Delta(\RRp^\E)} \sse [0,1]^\E$ be the family of all inclusion probabilities achievable by some online CRM of the form $\phi_\W$. It is immediate that $\Y=  \convexhull(\set{y(w): w \in \RRp^\E})$, and hence $\Y$ is a convex subset of $[0,1]^\E$.

An online $\alpha\gamma$-CRM for $\D$ of the form $\phi_\W$ exists if and only if $\Y$ intersects with the upwards closed convex set $\frac{x}{\alpha \gamma} + \RRp^\E$. Suppose for a contradiction that this intersection is empty; by the separating hyperplane theorem, this implies that there exists $w \in \RRp^\E$ such that $
\frac{1}{\alpha \gamma} \sum_i w_i x_i > \sum_i w_i y_i$ for all $y \in \Y$. In particular, $\frac{1}{\alpha \gamma} \sum_i w_i x_i > \sum_i w_i y_i(w)$. Since $\sum_i w_i x_i = \Ex_{R \sim \D} w(R)$ and $\sum_i w_i y_i(w) = \Ex_{R \sim \D}[w(\phi_w(R))]$, we get a contradiction with Lemma~\ref{lem:weight_based_crm}. This concludes the proof of the theorem.


\section{From Contention Resolution to a Secretary Algorithm?}
\label{sec:CRS_to_sec}
One might hope that online contention resolution is equivalent to the secretary problem on matroids. In particular, does a competitive universal online CRS imply a competitive secretary algorithm? We make partial progress towards this question. In particular, we reduce the secretary problem to online contention resolution on a particular uncontentious distribution derived from the matroid and sample of its elements: the distribution of ``improving elements'', as originally defined by \citet{karger_matroidsampling} for purposes different from ours. 

\begin{definition}\label{def:improving}
  Fix a matroid $\M=(\E,\I)$ with weights $w \in \RRp^\E$, and let $p \in (0,1)$. The random set $R$ of \emph{improving elements} with parameter $p$ is sampled as follows: Let $S \sim \ind_p(\E)$, and let $R=R(S) = \set{ i \in \E : \rank_w(S \union i) > \rank_w(S)}$. Equivalently, $R$ is the set of elements in $\E \sm S$ which are not spanned by higher weight elements in  $S$. Another equivalent definition is $R= \set{i \in \E \sm S: i \in \OPT_w(S \union i)}$.
\end{definition}
The maximum-weight independent subset of the improving elements is $(1-p)$-approximately optimal in expectation:
\begin{fact}\label{fact:improving_approximate}
  Fix a weighted matroid $(\M,w)$, and let $R$ be the random set of improving elements with parameter $p$. Each element of $\OPT_w(\M)$ is in $R$ with probability $1-p$. It follows that $\Ex [ w(R) ] \geq \Ex[\rank_w(R)] \geq (1-p) \rank_w(\M)$.
\end{fact}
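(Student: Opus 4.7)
The plan is to first establish the per-element claim that each $e \in \OPT_w(\M)$ satisfies $\Pr[e \in R] = 1-p$, and then combine this with linearity of expectation plus the observation that $R \cap \OPT_w(\M)$ is always an independent subset of $R$.

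For the per-element claim, I would use the characterization of the greedy-selected optimum in terms of spanning by higher-weight elements. Specifically, let $H_e = \{f \in \E : w_f > w_e\}$; then (since weights are distinct) $e \in \OPT_w(\M)$ iff $e \notin \spn(H_e)$. From the equivalent formulation in Definition~\ref{def:improving}, $e \in R$ iff $e \notin S$ and $e$ is not spanned by the higher-weight elements of $S$, i.e.\ $e \notin \spn(S \cap H_e)$. Now if $e \in \OPT_w(\M)$ so $e \notin \spn(H_e)$, then monotonicity of $\spn$ gives $e \notin \spn(S \cap H_e)$ for \emph{every} set $S$. Hence conditional on $e \notin S$, we always have $e \in R$, and so $\Pr[e \in R] = \Pr[e \notin S] = 1-p$.

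Given this, I would finish as follows. Since $\OPT_w(\M)$ is independent in $\M$, so is any subset of it; in particular $R \cap \OPT_w(\M)$ is always an independent subset of $R$, which yields $\rank_w(R) \geq w(R \cap \OPT_w(\M))$ pointwise. Taking expectations and applying linearity together with the per-element claim gives
\[
\Ex[\rank_w(R)] \;\geq\; \Ex[w(R \cap \OPT_w(\M))] \;=\; \sum_{e \in \OPT_w(\M)} w_e \Pr[e \in R] \;=\; (1-p)\, w(\OPT_w(\M)) \;=\; (1-p)\, \rank_w(\M).
\]
The remaining inequality $\Ex[w(R)] \geq \Ex[\rank_w(R)]$ is immediate from $\rank_w(T) \leq w(T)$ for every $T$ (with $w \succeq 0$), since the maximum-weight independent subset of $T$ has weight at most $w(T)$.

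There is no real obstacle here; the only subtle point is recognizing that the ``improving'' condition for an element $e$ only involves the random higher-weight elements $S \cap H_e$, so that $e \in \OPT_w(\M)$ (a property witnessed against all of $H_e$) immediately implies $e \in R$ whenever $e \notin S$. Distinctness of weights, which the paper assumes globally, is used to make the greedy characterization of $\OPT_w(\M)$ via $\spn(H_e)$ clean.
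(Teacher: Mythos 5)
The paper states this as a \emph{Fact} without supplying a proof (it is essentially a direct observation about improving elements from \citet{karger_matroidsampling}), so there is no paper-proof to compare against line by line. Your argument is correct and is the natural one: you use the greedy characterization $e \in \OPT_w(\M) \Rightarrow e \notin \spn(H_e)$ (where $H_e$ is the set of strictly heavier elements), note from Definition~\ref{def:improving} that $e \in R$ iff $e \notin S$ and $e \notin \spn(S \cap H_e)$, and then apply monotonicity of span to get $\Pr[e \in R] = \Pr[e \notin S] = 1-p$ for every $e \in \OPT_w(\M)$; the expectation bounds then follow from $R \cap \OPT_w(\M)$ being an independent subset of $R$ together with linearity and $\rank_w(T) \leq w(T)$. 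One tiny imprecision: your claimed equivalence ``$e \in \OPT_w(\M)$ iff $e \notin \spn(H_e)$'' can fail for zero-weight elements, which the paper's convention excludes from $\OPT_w$; but you only use the forward implication, for which the issue does not arise, so the proof stands.
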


Note that the random set $R$ of improving elements does not follow a product distribution. In fact, elements are (weakly) positively correlated in general. This is illustrated by the special case of the $1$-uniform matroid on $[n]$ with weights $w_1 > w_2 > \ldots > w_n$, and $p=1/2$: the distribution of $R$ is as described in Example \ref{ex:1}.  As our main result in this section, we nevertheless show that the random set of improving elements is uncontentious.

\begin{theorem}\label{thm:improving_uncontentious}
  Let $\M=(\E,\I)$ be a matroid with weights $w \in \RRp^\E$, and let $p \in (0,1)$. The random set of improving elements with parameter $p$ is $\frac{1}{p}$-uncontentious.
\end{theorem}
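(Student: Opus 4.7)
The plan is to verify characterization (c) of Theorem~\ref{thm:characterize_uncontentious}: for every $\F \subseteq \E$, $\Ex[|R \cap \F|] \leq \tfrac{1}{p}\Ex[\rank_\M(R \cap \F)]$. A natural CRM-based attack via, e.g., the minimum-weight basis of $R$ fails to deliver the bound element-wise, as small examples (such as the rank-$2$ uniform matroid on $4$ elements with $p = 1/3$) reveal that the per-element ratio $\Pr[i \in \phi(R)] / \Pr[i \in R]$ can drop below $p$ for the top-weight element. Hence the proof must verify (c) globally, trading slack across elements of $\F$.

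My proof plan is an induction on $|\E|$, carried out simultaneously over all matroids and all $p \in (0,1)$. The structural reduction is as follows: for a distinguished element $i$, condition on $S_{\geq i} := S \cap \E_{\geq i}$ and consider the contracted-restricted matroid $\M^* := (\M / \spn_\M(S_{\geq i}))\,|\,(\E^* \cup \{i\})$, where $\E^* := \E_{<i} \setminus \spn_\M(S_{\geq i})$. A direct calculation with matroid contraction identities shows that, conditional on $i \in R$ (i.e., $i \notin S$ and $i \notin \spn_\M(S_{>i})$), the set $R \cap \E_{<i}$ is precisely the improving set of $\M^*$ with respect to the sample $S \cap \E^*$, and $i$ is the top-weight element of $\M^*$. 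Since $\M^*$ is typically strictly smaller than $\M$, this transfers bounds for $R \cap \E_{<i}$ to the inductive hypothesis.

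The reduction collapses in the \emph{top-weight base case}: when $i$ is already top-weight in $\M$ and $S_{\geq i} = \emptyset$, so $\M^* = \M$. This is the main technical obstacle. My plan is to reveal $S$-memberships incrementally in decreasing $w$-order, processing elements $e_1, e_2, \ldots, e_n$, and track the partial sums on both sides of (c) as $R_k := R \cap \{e_1, \ldots, e_k\}$ grows, thereby monitoring the ``rank deficit'' $p |R_k \cap \F| - \rank_\M(R_k \cap \F)$ along the filtration. The charging argument pairs each ``excess'' $R$-element---one contributing to $|R \cap \F|$ but not to $\rank_\M(R \cap \F)$---with the complementary $p$-probability $S$-configuration in which the element would not have entered $R$ (namely, the scenario where some specific $e_{k'}$ is in $S$), charging the $p$-credit from that scenario against the deficit. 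The key subtlety---the ``martingale-flavored'' ingredient mentioned in the paper---is ensuring that this charging is coherent across the entire filtration of revealed $S$-memberships so that each $p$-credit is spent at most once; this prevents compounding errors across steps and yields the global bound $p \Ex[|R \cap \F|] - \Ex[\rank_\M(R \cap \F)] \leq 0$, completing the induction.
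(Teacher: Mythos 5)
You correctly identify condition (c) of Theorem~\ref{thm:characterize_uncontentious} as the right target, and your observation that a per-element CRM argument fails for the top-weight element (your rank-$2$ example does check out for the min-weight-basis CRM) is a genuine and useful insight. However, the proposal has a real gap exactly where the difficulty lies. You set up an induction on $|\E|$ that conditions on $S_{\geq i}$ and passes to a contracted minor $\M^*$, but you acknowledge that this reduction collapses in the ``top-weight base case'' where $\M^* = \M$, and the only tool you offer there is a filtration-plus-charging argument that is announced rather than carried out: you never specify the charging rule (which realization of $S$ is charged for which excess $R$-element), never argue that each unit of $p$-credit is spent at most once, and never show how the deficit $p|R_k \cap \F| - \rank(R_k \cap \F)$ is actually controlled along the filtration. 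Since you yourself single this out as the main technical obstacle, ``ensure the charging is coherent'' is a restatement of the goal, not a proof of it. There is also a loose end in the inductive step itself: the inductive hypothesis concerns $\rank_{\M^*}$, and you do not explain how that converts back into a bound on $\rank_\M$ and recombines with the contribution of the distinguished element $i$ and with the probability mass on the two branches of the conditioning.

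For contrast, the paper's proof neither inducts on the matroid nor contracts. It introduces the intermediate quantities $\F \cap \OPT_w(S \cup \F)$ and $\F \cap \OPT_w(S)$ and chains three inequalities: first, $\Ex[\rank(R \cap \F)] \geq (1-p)\,\Ex\bigl[|\F \cap \OPT_w(S \cup \F)|\bigr]$, by conditioning on $T = S \setminus \F$ and noting each element of $\F \cap \OPT_w(T \uplus \F)$ lands in $R$ with probability $1-p$; second, the deterministic monotonicity $|\F \cap \OPT_w(S \cup \F)| \geq |\F \cap \OPT_w(S)|$, by an exchange argument adding elements of $\F \setminus S$ one at a time; and third, $\Ex\bigl[|\F \cap \OPT_w(S)|\bigr] \geq \tfrac{p}{1-p}\,\Ex[|R \cap \F|]$, proved per-element by conditioning on $S_{>i}$: if $i \in \spn(S_{>i})$ then $i$ is in neither $\OPT_w(S)$ nor $R$, and otherwise $i \in \OPT_w(S)$ exactly when $i \in S$ (probability $p$) while $i \in R$ exactly when $i \notin S$ (probability $1-p$). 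The conditioning on $S_{>i}$ that you gesture at does appear in the paper, but only in that last step as a clean disjoint-events computation, not as the seed of an induction or a martingale. If you want to salvage your approach, the thing to discover is precisely this $\OPT_w(S)$ bookkeeping; as written, the proposal does not get there.
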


Theorem~\ref{thm:improving_uncontentious} and Fact~\ref{fact:improving_approximate}, taken together, essentially reduce the matroid secretary problem to online contention resolution for the distribution of the random set of improving elements, with one caveat we will discuss shortly. In particular, consider the following blueprint for a secretary algorithm:
\begin{enumerate}
\item Let $S$ be the first $\binom(|\E|,p)$ elements arriving online.
\item Let $R=R(S) \sse \E \sm S$ be a sample of the set of improving elements with parameter $p$.
\item After observing $S$, the elements of $\E \sm S$ arrive online in random order and are presented as such to an online contention resolution algorithm, along with their membership status in $R$. Note that membership in $R$ can be determined ``on the spot'' as required for online contention resolution.\footnote{Technically, a CRM requires that elements of $\E$ --- rather than merely $\E \sm S$ --- be presented in uniform random order along with their membership status in $R$. This is easily accomplished  by appropriately interleaving the elements of $S$ --- none of which are in $R$ --- among the elements of $\E \sm S$.}
\end{enumerate}
Now given a $\beta$-competitive $\alpha$-universal online CRS, we set $p=\frac{1}{\alpha}$ and obtain a $\frac{\beta}{1-p}$-competitive secretary algorithm. 
However, the following caveat prevents us from proving a formal theorem of this form: we cannot provide the online CRS with a complete description of the prior distribution. In particular, the distribution $\D$ of improving elements --- while fully described by the weighted matroid $(\M,w)$ and the parameter $p$ --- can not be fully described to the contention resolution algorithm prior to its invocation, since entries of $w$ are revealed online. As such, we learn both the sample $R \sim \D$ and the distribution $\D$ gradually as elements arrive. An oblivious universal online CRS would resolve this difficulty, but unfortunately we proved in Theorem~\ref{thm:prior_independent_impossible} that such a CRS can not exist even for simple matroids and even offline. A reduction from the matroid secretary problem to contention resolution must therefore require a CRS which can make do with only partial knowledge of the prior. We leave exploration of these possibilities for future work, and discuss them further in the Conclusion section.

\subsection{Proof of Theorem~\ref{thm:improving_uncontentious}}
 
 Let $p$, $S$, and $R$ be as in Definition \ref{def:improving}. We prove that $R$ is uncontentious by leveraging (c) from Theorem~\ref{thm:characterize_uncontentious}. In particular we will show that, for arbitrary $F \sse \E$.
\[ \Ex[\rank(R \intersect F)] \geq p \Ex [ |R \intersect F|]  \]
We break this up into the following three lemmas.

\begin{lemma}
  \[\Ex [\rank(R \intersect F)] \geq (1-p) \Ex[|F \intersect \OPT_w(S \union F)|] \]
\end{lemma}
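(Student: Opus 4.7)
The plan is to lower bound $\rank(R \cap F)$ pointwise (in $S$) by the cardinality of the subset $(F \cap \OPT_w(S \cup F)) \setminus S$, then exploit an elementwise independence between the events ``$i \in S$'' and ``$i \in \OPT_w(S \cup F)$'' to extract the $(1-p)$ factor.

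First I would establish the set inclusion $(F \cap \OPT_w(S \cup F)) \setminus S \sse R \cap F$. Fix $i$ in the left-hand side. By the greedy characterization of $\OPT_w$ in a matroid, membership of $i$ in $\OPT_w(S \cup F)$ is equivalent to $i$ not being spanned by the higher-weight elements of $(S \cup F) \sm \set{i}$. Since $i \notin S$, we have $S \sse (S \cup F) \sm \set{i}$, so $i$ is a fortiori not spanned by the higher-weight elements of $S$, which by Definition~\ref{def:improving} means $i \in R$. Combined with $i \in F$, this gives $i \in R \cap F$.

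Next, because $(F \cap \OPT_w(S \cup F)) \sm S$ is contained in the independent set $\OPT_w(S \cup F)$, its rank equals its cardinality, yielding
\[ \rank(R \cap F) \geq \rank\bigl((F \cap \OPT_w(S \cup F)) \sm S\bigr) = \bigl|(F \cap \OPT_w(S \cup F)) \sm S\bigr|. \]
Taking expectations reduces the lemma to showing
\[ \Ex\bigl[|(F \cap \OPT_w(S \cup F)) \sm S|\bigr] \;=\; (1-p)\, \Ex\bigl[|F \cap \OPT_w(S \cup F)|\bigr]. \]

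The crux is the following independence observation, which I view as the only nontrivial step. Fix any $i \in F$. Since $i \in F$, the set $(S \cup F) \sm \set{i} = (S \sm \set{i}) \cup (F \sm \set{i})$ does not depend on the indicator $\one_{i \in S}$; it is a measurable function only of the coordinates $\set{\one_{j \in S} : j \neq i}$. Because $\one_{i \in \OPT_w(S \cup F)}$ is determined by $(S \cup F) \sm \set{i}$ via the greedy/span criterion, it is independent of $\one_{i \in S}$. Hence
\[ \Pr[i \in \OPT_w(S \cup F)\ \text{and}\ i \notin S] \;=\; (1-p) \Pr[i \in \OPT_w(S \cup F)]. \]
Summing over $i \in F$ by linearity of expectation completes the proof. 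The rest of the lemma is just straightforward matroid bookkeeping; the one thing worth being careful about is that the independence argument above really uses that $i \in F$, so that toggling $\one_{i \in S}$ doesn't alter $S \cup F$ away from $i$.
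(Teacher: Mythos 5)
Your proof is correct and follows essentially the same approach as the paper: both hinge on the facts that $\bigl(F \cap \OPT_w(S \cup F)\bigr) \setminus S$ is an independent subset of $R \cap F$, and that for $i \in F$ the indicator $\one_{i \in S}$ is independent of $S \cup F$ (and hence of $\one_{i \in \OPT_w(S \cup F)}$). The paper packages this independence by conditioning on $T = S \setminus F$, whereas you argue it per element; this is a minor repackaging, not a different proof.
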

\begin{proof}
Let $T=S \sm F$, and note that $S \union F = T \uplus F$. We condition on the random variable $T$ and show that the following holds conditionally
\begin{equation}
  \label{eq:1}
    \Ex [\rank(R \intersect F)] \geq (1-p) |F \intersect \OPT_w(T \uplus F)|].
\end{equation}
  
Take $i \in F \intersect \OPT_w(T \uplus F)$. We will show that $i$ is in $R$, and hence is in $R \intersect F$, with probability $1-p$. Since $i \in S \union i \sse T \uplus F$ and $i \in \OPT_w(T\uplus F)$, it follows from the matroid axioms that  $i \in \OPT_w(S \union i)$.  With probability $1-p$ we also have $i \not\in S$, in which case $i \in R$ by definition. 

Since  $F \intersect \OPT_w(T \uplus F)$ is an independent set, \eqref{eq:1} follows.
\end{proof}
\begin{lemma}
 \[|F \intersect \OPT_w(S \union F)| \geq |F \intersect \OPT_w(S)| \]\nopagebreak%
\end{lemma}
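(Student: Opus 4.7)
My plan is to prove this via a matroid contraction argument. Set $A = F \intersect \OPT_w(S)$, $B = F \intersect \OPT_w(S \union F)$, and $T = \OPT_w(S \union F) \sm F$, so $\OPT_w(S \union F) = T \uplus B$ is independent in $\M$, and the goal is to show $|A| \leq |B|$.

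A first observation is that $|B| = \rank_{\M / T}(F)$. Since $T \union B = \OPT_w(S \union F)$ is independent in $\M$, the set $B$ is independent in the minor $\M / T$. Moreover, $T \union B \sse F \union T \sse S \union F$ combined with the maximality of $T \union B$ among independent subsets of $S \union F$ gives $\rank_\M(F \union T) = |T| + |B|$, so $\rank_{\M / T}(F) = \rank_\M(F \union T) - |T| = |B|$.

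The crux is to show that $A \union T$ is independent in $\M$; once this is in hand, $A$ is independent in $\M / T$ and therefore $|A| \leq \rank_{\M / T}(F) = |B|$, completing the proof. Suppose for contradiction that $A \union T$ contains a circuit $C$, and let $a^*$ be the unique lightest element of $C$ (well-defined by the distinct-weights assumption). I would then observe the two inclusions $A \sse \OPT_w(S) \sse S$ and $T = \OPT_w(S \union F) \sm F \sse (S \union F) \sm F \sse S$, so that $A \union T \sse S$. In particular, $C - a^* \sse S$ consists of elements strictly heavier than $a^*$ which together span $a^*$ in $\M$. On the other hand, $a^* \in A \union T$ means $a^* \in \OPT_w(X)$ for some $X \in \set{S, S \union F}$, and the greedy characterization of $\OPT_w$ forbids $a^*$ from being spanned by elements of $X$ strictly heavier than itself. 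Since $C - a^* \sse S \sse X$, this is a contradiction.

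The only genuine obstacle in this plan is the independence of $A \union T$, which in turn rests on the (mildly surprising) fact that $T$ --- despite being defined through $\OPT_w(S \union F)$ --- actually sits inside $S$. This keeps every element of $A \union T$ within $S$ and lets the greedy characterization apply uniformly to whichever $\OPT_w$-set houses the putative lightest element $a^*$; without this inclusion one would have to contend with $T$-elements that live purely in $F$, and the clean single-circuit contradiction would break.
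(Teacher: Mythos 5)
Your proof is correct, and it takes a genuinely different route from the paper's. The paper proves the lemma by induction: it starts from $T = S$, adds the elements of $F \setminus S$ one at a time, and tracks via a three-way case analysis (using the exchange property) how $|F \cap \OPT_w(T)|$ can only stay the same or increase. You instead give a one-shot structural argument: set $T = \OPT_w(S \cup F) \setminus F$, show $|F \cap \OPT_w(S \cup F)| = \rank_{\M/T}(F)$, and then show $A = F \cap \OPT_w(S)$ is independent in $\M/T$ by observing that both $A$ and $T$ sit inside $S$, so any circuit in $A \cup T$ would have its lightest element spanned by strictly heavier elements of $S$ --- contradicting the greedy characterization of whichever $\OPT_w$-set that lightest element belongs to. The key observation you isolate --- that $T$ sits inside $S$, even though it is defined via $\OPT_w(S \cup F)$ --- is exactly what makes the single-circuit contradiction work, and it yields a somewhat slicker argument than the paper's element-by-element induction. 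One small imprecision worth noting: you invoke ``the maximality of $T \cup B$ among independent subsets of $S \cup F$,'' but $\OPT_w(S \cup F)$ is the max-weight (minimum-cardinality) independent set and excludes zero-weight elements, so it is a basis of $S \cup F$ only after discarding zero-weight elements; since $A$, $B$, and $T$ all consist of positive-weight elements anyway, the identity $\rank_{\M/T}(F) = |B|$ should be read as $\rank_{\M/T}(F^+) = |B|$ where $F^+$ is the positive-weight part of $F$, and the rest of your argument goes through verbatim. Under the paper's standing assumption of distinct weights this affects at most one element and is a cosmetic point.
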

\begin{proof}
We prove this by induction on a set $T$ with $S \sse T \sse S \union F$, initialized to $T=S$ at the base case. Consider how the value of $|F \intersect \OPT_w(T)|$ changes as we add elements of $F \sm S$ to $T$ one by one. When adding an element $i \in F \sm T$ to $T$, there are three cases:
\begin{itemize}
\item $i\not\in \OPT_w(T \union i)$: In this case, $\OPT_w(T \union i) = \OPT_w(T)$ and  $|F \intersect \OPT_w(T \union i)| = |F \intersect \OPT_w(T)|$.
\item $i$ is not spanned by $T$, and $i \in \OPT_w(T \union i)$: In this case, $\OPT_w(T \union i) = \OPT_w(T) \union \set{i}$, and therefore $|F \intersect \OPT_w(T \union i)| = 1+ |F \intersect \OPT_w(T)|$.
\item $i$ is spanned by $T$, and $i \in \OPT_w(T \union i)$: In this case, elementary application of the matroid axioms implies that  $\OPT_w(T \union i) = \OPT_w(T) \union \set{i} \sm \set{j}$ for some $j \in T$. Since $i \in F$, it follows that $|F \intersect \OPT_w(T \union i)|$ is either equal to $|F \intersect \OPT_w(T)|$ or exceeds it by $1$, depending on whether $j \in F$.
\end{itemize}
\end{proof}
\begin{lemma}
  \[\Ex[|\OPT_w(S) \intersect F|] \geq \frac{p}{1-p} \Ex [|R \intersect F|] \]
\end{lemma}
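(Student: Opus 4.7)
The plan is to establish the lemma pointwise over $F$: for each $i \in F$, I will show the \emph{identity} $\Pr[i \in \OPT_w(S) \cap F] = \frac{p}{1-p}\,\Pr[i \in R \cap F]$, from which the lemma follows (in fact, with equality) by summing and applying linearity of expectation.

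The key trick is to decouple the randomness of whether $i$ is sampled into $S$ from everything else. Fix $i \in F$ and set $T := S \setminus \{i\}$; then $T$ is distributed as $\ind_p(\E \setminus \{i\})$ and is independent of the event $\{i \in S\}$. The crucial observation is that, regardless of whether $i$ lies in $S$, one has $S \cup \{i\} = T \cup \{i\}$ as sets, so the event $\{i \in \OPT_w(S \cup i)\}$ is a function of $T$ alone. Writing $q_i := \Pr_T[i \in \OPT_w(T \cup i)]$, I then obtain
\[\Pr[i \in R \cap F] \;=\; \Pr[i \notin S]\cdot q_i \;=\; (1-p)\, q_i,\]
since membership of $i$ in $R$ requires $i \notin S$ together with the $T$-measurable event $i \in \OPT_w(S \cup i)$, and these two events are independent by the preceding observation.

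On the other hand, when $i \in S$ we have $S = T \cup \{i\}$, so the events $\{i \in \OPT_w(S)\}$ and $\{i \in \OPT_w(T \cup i)\}$ are literally the same. Hence
\[\Pr[i \in \OPT_w(S) \cap F] \;=\; \Pr[i \in S]\cdot q_i \;=\; p\, q_i,\]
and comparing the two identities yields the claimed ratio elementwise. There is no serious obstacle in this argument; the only subtlety worth emphasizing is the set-theoretic identity $S \cup \{i\} = T \cup \{i\}$, which is precisely what makes the ``improving'' event measurable with respect to $T$ regardless of $i$'s status in $S$, enabling the clean independence-based factorization above.
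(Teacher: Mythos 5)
Your proof is correct and takes essentially the same approach as the paper: both establish the pointwise ratio for each $i \in F$ by decoupling the event $\{i \in S\}$ from the rest of $S$. The paper conditions on $S_{>i}$ and invokes the spanning characterization of improving elements, whereas you condition on the larger object $T = S \setminus \{i\}$ and use the set-theoretic identity $S \cup \{i\} = T \cup \{i\}$; these are interchangeable since $i \in \OPT_w(T \cup i)$ if and only if $i \notin \spn(S_{>i})$, and both arguments in fact yield equality rather than the stated inequality.
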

\begin{proof}
  For each $i \in F$, we will show that $\Pr[i \in \OPT_w(S)] \geq \frac{p}{1-p} \Pr[i \in R]$, which suffices.

  Take $i \in F$, and let $S_{>i}= \set{j \in S : w_j > w_i}$. Conditioning on $S_{>i}$, there are two cases:
  \begin{itemize}
  \item $i \in \spn(S_{>i})$: It follows that $i \not\in \OPT_w(S)$ and  $i \not\in R$, with certainty.
  \item $i \not\in \spn(S_{>i})$: With probability $p$ we have $i \in S$ and therefore $i \in \OPT_w(S)$ and $i \not\in R$. With the remaining probability $(1-p)$ we have $i \not\in S$ and therefore $i \in R$ and $i \not\in \OPT_w(S)$.
  \end{itemize}
In both cases, the conditional probability that $i \in \OPT_w(S)$ is at least $\frac{p}{1-p}$ times the conditional probability that $i \in R$. The lemma follows.
\end{proof}

\subsection{Where Prior Work Fails}\label{sec:prior_fails}
There has been speculation in the community that contention resolution for improving element distributions can be accomplished online using the ideas of \citet{OCRS}. If this were true, then a stronger (online) form of  our Theorem~\ref{thm:improving_uncontentious} would follow. We show that such conjecture is fatally flawed: there exists no $o(n)$-competitive online CRS in the worst-case arrival model, even when  both the order and the distribution of improving elements  are known to the algorithm. In other words, any competitive online CRS for improving element distributions must make and exploit assumptions on the arrival order. This rules out direct application of the arguments and techniques of \citet{OCRS}, which --- in holding for an (unknown) worst-case arrivals --- cannot exploit the uniform arrival order. The same can be said for the work of \citet{OCRS_prophet}, which operates in the known worst-case arrival model. 

We prove the following theorem, then elaborate on how algorithms from prior work tend to fail on simple examples.

\begin{theorem}
  Let $\M$ be a matroid on $n$ elements. There is no $o(n)$-competitive online CRS for (known) improving element distributions on $M$ in the worst-case arrival model. This holds even for the $1$-uniform matroid, for every constant parameter $p$ of the distribution of improving elements, and even when the arrival order is known to the algorithm.
\end{theorem}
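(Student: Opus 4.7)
The plan is to take $\M$ to be the $1$-uniform matroid on $[n]$ with distinct weights $w_1 > w_2 > \cdots > w_n > 0$, and let the adversary choose the forward arrival order $1, 2, \ldots, n$, which we allow the algorithm to know in advance. Specializing Definition~\ref{def:improving} to this matroid, the random set of improving elements with parameter $p$ is the prefix $R = \{1, 2, \ldots, K\}$, where $K$ is the largest integer for which none of $1, \ldots, K$ was sampled into $S \sim \ind_p([n])$; in particular $\Pr[t \in R] = (1-p)^t$ for every $t \in [n]$. The heart of the argument will be that, along the single ``informative'' trajectory in which the algorithm observes ``$s \in R$'' for every $s$, the algorithm has at most one unit of probability mass to distribute across $n$ different selection events, each of which must carry probability at least $1/\alpha$.

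In the forward order, every online CRS reduces to a sequence $\sigma_1, \ldots, \sigma_n \in [0,1]$, where $\sigma_t$ is the probability of selecting element $t$ conditional on ``$t \in R$ and no earlier element has been selected''. This reduction is clean because of the prefix structure of $R$: on the event $\{t \in R\}$ one automatically has $1, 2, \ldots, t-1 \in R$ as well, and any earlier observation ``$s \notin R$'' forces $R \subseteq \{1,\ldots,s-1\}$ and thereby kills all remaining selection opportunities. Consequently, on the event $\{t \in R\}$ the conditional probability of reaching round $t$ with nothing yet selected is exactly $\prod_{s=1}^{t-1}(1-\sigma_s)$, so
\[ \Pr[t \in \phi(R)] \;=\; (1-p)^t \prod_{s=1}^{t-1}(1-\sigma_s)\,\sigma_t. \]

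Plugging this into the $\alpha$-competitiveness requirement $\Pr[t \in \phi(R)] \geq \Pr[t \in R]/\alpha$ and dividing both sides by $(1-p)^t$ yields $\prod_{s=1}^{t-1}(1-\sigma_s)\,\sigma_t \geq 1/\alpha$ for every $t \in [n]$. Summing over $t$ and recognizing the left-hand side as a telescoping sum,
\[ \frac{n}{\alpha} \;\leq\; \sum_{t=1}^n \prod_{s=1}^{t-1}(1-\sigma_s)\,\sigma_t \;=\; 1 - \prod_{s=1}^n(1-\sigma_s) \;\leq\; 1, \]
so $\alpha \geq n$, regardless of the value of the constant $p \in (0,1)$ and the specific weights. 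The only step requiring any care is the reduction to the $\sigma_t$ parameters, which amounts to checking by induction on $t$ that an adaptive randomized algorithm can be represented, along the trajectory of all-in-$R$ observations, by a single scalar selection probability at each round; this is immediate from the prefix structure, and once it is in place, the rest of the argument is the telescoping calculation above.
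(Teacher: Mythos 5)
Your proof is correct and follows essentially the same route as the paper's: both exploit that on the trajectory ``all observed elements so far are in $R$'' the algorithm's selection probability at step $t$ cannot depend on the unseen future of $R$, and then reduce to the feasibility constraint that at most one element is picked when $R=[n]$. Your telescoping sum $\sum_t \sigma_t\prod_{s<t}(1-\sigma_s)\le 1$ is precisely the paper's $\sum_i \Pr[i\in T\mid R=[n]]\le 1$ in disguise, so the two arguments are the same modulo presentation.
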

\begin{proof}
Let $\set{1,\ldots,n}$ denote the ground set of of a $1$-uniform matroid, listed in decreasing order of weight. Let $R$ be the random set of improving elements with parameter $p$. Note that $R$ is supported on sets of the form $\set{1,\ldots,k}$ for $k=0,\ldots,n$. In the special case of $p=1/2$, the distribution of $R$ is as described in Example~\ref{ex:1}. In general, $\Pr[ R= \set{1,\ldots,k}] = p (1-p)^k$. The random set $R$ is $1/p$ uncontentious, as shown by Theorem~\ref{thm:improving_uncontentious}. Concretely, the offline CRM $\phi(\set{1,\ldots,k}) = k$ is $\frac{1}{p}$-competitive.

Now suppose that elements are known to arrive online in the order $1,2,3,\ldots,n$, and consider an $\alpha$-competitive online CRM for some $\alpha \geq 1$. Let $T \sse R$ be the (random) set of elements selected by the CRM.  Conditioned on $i \in R$, the CRM must select $i$ with probability at least $\frac{1}{\alpha}$. Formally, $\Pr[ i \in T | i \in R] \geq \frac{1}{\alpha}$.

  When element $i$ arrives, the CRM learns whether $i \in R$, and if so must decide whether to select $i$.  Since the online CRM has only observed elements $1,\ldots,i$, and must make its decision on the spot, it cannot distinguish between different sets of the form $R=\set{1,\ldots,k}$ for $k \geq i$. In other words, it cannot distinguish between the different realizations of $R$ which include $i$, and must therefore select $i$ with probability at least $\frac{1}{\alpha}$ in every realization of $R$ which includes $i$. Formally, $\frac{1}{\alpha} \leq \Pr [ i \in T | i \in R] = \Pr [ i \in T | R=\set{1,\ldots,k}]$ for every $k \geq i$.

Since $i$ was chosen arbitrarily, we can take $k=n$ and conclude that $\Pr[ i \in T | R=\set{1,\ldots,n} ] \geq \frac{1}{\alpha}$ for all $i$. Feasibility requires that $\sum_{i=1}^n \Pr [ i \in T | R = \set{1,\ldots,n}] \leq 1$. Therefore, $\alpha \geq n$.
\end{proof}

It is instructive to examine where the algorithm of \citet{OCRS} fails in the special case of the $1$-uniform matroid on $n$ elements, even when improving elements are presented in a uniformly random order. Indeed, we will argue that no ``simple tricks'' seem to save the day. Recall that the algorithm of \cite{OCRS} defines a sequence of nested flats $\emptyset \subset F_1 \subset F_2 \subset \ldots \subset F_k$, and runs the greedy online algorithm on each contracted submatroid $F_i / F_{i-1}$.  The $1$-uniform matroid contains only a single non-empty flat, containing all elements. Therefore, the algorithm of \cite{OCRS} reduces merely to the naive greedy online algorithm which simply selects the first element it encounters, which in the case of a uniform arrival order is a uniformly random improving element. 

Now, let $[n]=\set{1,\ldots,n}$ denote the ground set of the $1$-uniform matroid listed in decreasing order of weight, and consider the distribution of improving elements $R$ with parameter $p=1/2$ as described in Example~\ref{ex:1}. Element $k$ is improving with probability $2^{-k}$, yet is selected by the algorithm with probability $\sum_{i=k}^{n-1} 2^{-(i+1)} \cdot \frac{1}{i} + 2^{-n} \cdot \frac{1}{n} < \frac{2^{-k}}{k} = \frac{\Pr [ k \in R]}{k}$. Intuitively, when $k$ is improving, so are elements $1,\ldots,k-1$, which easily span $k$ and are not distinguished from $k$ by the algorithm. It is easy to show that the algorithm suffers the same fate for any other choice of $p$. 

One might be tempted to employ other tricks, such as for example ``canceling'' each element in $R$ with independent constant probability in order to reduce contention and place the marginal probability vector deep in the matroid polytope. Such tricks are doomed to fail all the same: the algorithm groups all elements into the same (unique) flat, and in doing so does not distinguish between the ``uncanceled'' elements of $R$, so cannot select element $k$ with probability exceeding $\frac{\Pr[ k \in R]}{k}$.

It is hopefully now clear that any online CRS for improving element distributions must make and exploit assumptions on the arrival order. Whereas this rules out obvious extensions of \citet{OCRS} and \citet{OCRS_prophet}, one might hope that the algorithm of \citet{ROCRS} might fare better, since they do exploit the random ordering assumption. Sadly,  their algorithm also fails for the $1$-uniform matroid: it also does not distinguish between different improving elements in this special case, and therefore also selects element $k$ with probability no more than $\frac{\Pr[ k \in R]}{k}$. That being said, we are more hopeful that the techniques of \cite{ROCRS}, if combined with significant new ideas, might yield progress on online contention resolution for positively correlated distributions.


\section{Conclusions and Open Problems}


In this paper, we begin an exploration of the power and limitations of contention resolution beyond known product distributions, as well as its connections to secretary problems. We hope that our results are a first step towards broader application of the techniques behind contention resolution and online selection. Most notably, our results highlight approaches to resolving the matroid secretary conjecture. We identify several intriguing open questions in pursuit of these agendas.
\begin{itemize}
\item Can the result of Theorem~\ref{thm:sec_to_CRS} be shown unconditionally; i.e., can we show a competitive universal online CRS for matroids without assuming the matroid secretary conjecture? We believe this to be a reasonable first step towards proving the matroid secretary conjecture. As we show in Section~\ref{sec:prior_fails}, prior work on online contention resolution fails in the presence of even the modest positive correlation exhibited by (uncontentious) improving element distributions on simple matroids. Therefore, we believe significant  new ideas are required.
\item Recalling the caveat to our results from Section~\ref{sec:CRS_to_sec}, can a tighter connection be made between the secretary problem and contention resolution? Is there a natural model of contention resolution on matroids which permits a reduction both from and to the matroid secretary problem? The knee-jerk approaches using duality-like arguments fail to establish such an equivalence, so new ideas appear to be required.
\item The caveat to our results from Section~\ref{sec:CRS_to_sec} suggests that resolving contention with limited knowledge of the prior is closely related to the matroid secretary conjecture. Recalling our impossibility result of Theorem~\ref{thm:prior_independent_impossible}, we can start by examining prior-independent contention resolution for interesting classes of distributions.  For example,  is there a competitive prior-independent (or even oblivious) CRS for ex-ante-feasible product distributions? 
\item Can Theorem~\ref{thm:sec_to_CRS} be made computationally efficient? Given only oracle access to an arbitrary uncontentious distribution and an arbitrary algorithm for the matroid secretary problem, this is unclear.
\item Is there an analogue of our characterization of uncontentious distributions for prophet inequality problems? In particular, can we characterize joint distributions of random variables which permit competitive prophet inequalities with respect to a given matroid?
\item Do more general set systems permit a characterization of uncontentious distributions with a finite set of inequalities, a-la Theorem~\ref{thm:characterize_uncontentious}?
\end{itemize}

We restricted our attention to matroids in the paper, though some notes are in order on extensions of our results to more general constraints. In the characterization of Theorem~\ref{thm:characterize_uncontentious}, the equivalence of (a) and (b) holds for a general downwards-closed set systems, and is implicit in the arguments of \cite{CRS}. The equivalence with (c) exploits the matroid structure, however. Theorem~\ref{thm:sec_to_CRS} also holds for general downwards-closed set systems, and our proof does not invoke the matroid assumption. The results and arguments of Section~\ref{sec:CRS_to_sec}, in particular Theorem~\ref{thm:improving_uncontentious}, heavily rely on the matroid structure and do not appear to be easily extensible beyond matroids. We leave further extensions of our results beyond matroids for future work.



{
\bibliography{ucrs}
\bibliographystyle{abbrvnat}               
}


\end{document}